\newcommand{ \lin}{\mbox{L}}
\newcommand{\ml}{\mbox{ML}}
\newcommand{\m}{{\mbox{d}}}
\newcommand{\bd}{{\mbox{Bd}}}
\newcommand{\p}{{\mbox{P}}}
\newcommand{\br}{{\mbox{Bd}}}
\newcommand{\traj}{\tau}
\newcommand{\cycle}{\zeta}
\newcommand{\Cycle}{\mbox{C}}
\newcommand{\reach}{{\mbox{Reach}}}
\newcommand{\dist}{{\mbox{Dist}}}
\newcommand{\sign}{{\mbox{Sign}}}
\newcommand{\D}{{\mbox{d}}}  
\newcommand{\seq}{\overline{\star}}  
\newcommand{\sgn}{\Sigma} 
\newcommand{\ssgn}{\overline{\sgn}}  
\newcommand{\Trj}{\mbox{Tr}}
\newcommand{\STrj}{\overline{\Trj}} 
\newcommand{\Hc}{\Cycle^h} 
\newcommand{\ssignat}{\sgn^s}
\newcommand{\be}{\mbox{e}}
\newtheorem{theorem}{Theorem}[section]
\newtheorem{lemma}[theorem]{Lemma}
\newtheorem{definition}[theorem]{Definition}
\newenvironment{proof}[1][Proof]{\begin{trivlist}
\item[\hskip \labelsep {\bfseries #1}]}{\end{trivlist}}
\newenvironment{newalgo}[2]{\begin{algorithm}
\caption{\textsc{#1}}\label{#2}
\begin{algorithmic}[1]
\vspace{0.0in}} {\end{algorithmic}\end{algorithm}}
\begin{document}

\title{Deciding Reachability 
 for 3-Dimensional\\ Multi-Linear Systems \thanks{This work
      was supported in part by the ``Concept for the Future'' of  
Karlsruhe Institute of
      Technology within the framework of the German Excellence  
Initiative.}}

\author{Olga Tveretina
\institute{Karlsruhe Institute of Technology\\ Karlsruhe, Germany}
\email{olga.tveretina@kit.edu}
\and
\qquad Daniel Funke
\institute{Karlsruhe Institute of Technology\\ Karlsruhe, Germany}
\email{\quad daniel.funke@student.kit.edu}
}
\def\titlerunning{Deciding Reachability  for 3-Dimensional Multi-Linear Systems}
\def\authorrunning{O. Tveretina, D. Funke}

\maketitle

\begin{abstract}
This paper deals with the problem of point-to-point reachability in multi-linear systems. 
These systems consist of a partition of the  Euclidean space into a finite number of regions  and a constant derivative assigned to each region in the partition, 
which governs the dynamical behavior of the system within it. The reachability problem for multi-linear systems has been proven to be decidable for the two-dimensional case  and  undecidable for  the dimension three and higher.
Multi-linear systems however exhibit certain properties that make  them very suitable for topological analysis. 
We prove that reachability can be decided exactly in the 3-dimensional case when  systems satisfy certain conditions. 
We show with experiments that our approach can be orders of magnitude more efficient  than simulation.
\end{abstract}

\section{Introduction}

During the last decades a lot of devices have been developed  that consist of computers  interacting with a physical environment. Computers perform discrete operations, while a physical environment has continuous dynamics. Such systems are called hybrid systems.
Many of the  applications of hybrid systems, such as
intelligent highway systems, air traffic management systems and others  are \emph{safety critical} and require
the guarantee of a safe operation.

Formally verifying safety properties of hybrid
systems consists of building a set of reachable states and checking if this
set intersects with a set of unsafe states. Therefore one of the
most fundamental problems in the analysis of hybrid systems is the
\emph{reachability} problem.

The reachability problem is known as being difficult. It has been shown to be
decidable for special kinds of hybrid automata \cite{AD94,HKPV95,LPS00,LPY99,LPY99-2} including timed automata \cite{AD94}, 
some classes of rectangular hybrid automata \cite{HKPV95} and o-minimal hybrid automata \cite{LPS00}.

Since only  certain kinds of hybrid systems allow  for the exact computation of the reachable set,    approaches for safety verification include  the approximation of reachability analysis and  abstraction techniques. 
But these  techniques are easy to fail when applied to large systems since the complexity rises up very quickly with an increase in system size.

One of the drawbacks of approximation and propagation techniques is that too little attention is paid to the geometric properties of the systems under analysis \cite{ASY07}.  There are two main approaches in this direction: 1) methods that use topological properties of the plane \cite{MP93}, and 2) techniques based on the existence of integrals and the ability to compute them \cite{B99}. 

In this paper we consider {\it multi-linear systems} (\ml) also often called  {\it piecewise constant derivative systems} (PCDs) in the literature.  
They  are a special kind of  hybrid system, where the number of dimensions  refers to the number of continuous 
variables. Such systems satisfy the following restrictions: A discrete state
is defined by a set of linear constraints and 
 discrete transitions do not change continuous variables. 
\ml~systems have been proven to be decidable for the two-dimensional case \cite{MP93}, 
whereas the results presented in \cite{AMP95}
state that such systems are  undecidable for  the dimension three and higher.

The decidability results for the $2$-dimensional case rely on the existence of a periodic trajectory after a finite number of steps. This property does not hold for higher dimensions.
Nevertheless, $3$-dimensional systems also feature some sort of regularity. And, as in the $2$-dimensional case, $3$-dimensional multi-linear systems  exhibit  certain properties that make  them very suitable for topological analysis.

{\it Contribution}. 
We  consider  a subclass of multi-linear systems,  that we call multi-linear $\lambda$-systems. These systems 
 satisfy the following property: If there is a cyclic trajectory, then  the the points of each cycle iteration intersecting the same boundary element of a polyhedron lie on a straight line. A straightforward consequence of this assumption is that the distances between the corresponding boundary points of different rounds are proportional ($\lambda$-property). 
We  introduce the notion of a hypercycle, a generalization of a cycle. The infinity criterion for $2$-dimensional case, refer to \cite{MP93}, has an analog in $3$ dimensions. We show  that the $\lambda$-property holds also for hypercycles, and the reachability can be decided exactly if the derived infinity criterion for $3$ dimensions holds for a hypercycle. 

We have implemented our approach and compared it with simulation.
As soon as our algorithm detects a cycle (or a hypercycle) for which the given infinity criterion holds, the algorithm requires constant number of steps. While the number steps for simulation grows exponentially with the distance between points. 
Algorithms for  computing  reachable states  are often based on floating point computations  that involve rounding errors and the correctness of such algorithms can be violated. Since our algorithm takes significantly less steps, it leads to more exact computations.

A complete version of the paper containing all proofs and the details of the benchmarks is available at 
\cite{TF11-2}.

\section{Multi-Linear  Systems}\label{sec:basics}

Multi-linear systems consist of a partition of the  Euclidean space into a finite number of regions  and a constant derivative
assigned to each region in the partition. 
In this section we define these systems in a way similar to \cite{MP93}.

We consider an $n$-dimensional Euclidean space  ${\cal{R}}^{n}$ with a metric $\m$ and points in it denoted by $x$ and $y$. In the following, we  specify the position of any point in $3$-dimensional space by three Cartesian coordinates. 
A {\em linear half space} is a set of all point in ${\cal{R}}^{n}$ satisfying $Ax+B \bowtie 0$, where $\bowtie\,\in \{<,\leq,>,\geq\}$, $A$ is a rational vector and $B$ is a rational number.
A polyhedron is a  subset  of ${\cal{R}}^{n}$ obtained by  intersecting a finite number of linear half spaces. 
Since we have  a finite number of linear half spaces that divide the complete $n$-dimensional Euclidean space, there are polyhedra that are not bounded from all sides. 



\begin{definition}[Polyhedral partition] Given a finite set of linear half spaces 
${\cal S}=\{A_ix+B_i \bowtie 0, 1\leq i \leq n\}$, we say that ${\cal{P}}({\cal S})=\{\p_1, \dots, \p_m\}$ is a polyhedral partition of  ${\cal{R}}^{n}$  by ${\cal S}$ if:  
$1)$ $\bigcup_{i=1}^m \p_i={\cal{R}}^{n}$, and $2)$   $\p_i\cap\p_j=\emptyset$ for distinct $i,j\in\{1,\dots, m\}$.
\end{definition}
When it is convenient we will use $\cal{P}$ instead of ${\cal{P}}({\cal S})$ to denote a polyhedral partition. 
Given a polyhedral partition $\cal{P}({\cal S})$,  we define  the  set of its  boundary points  as  
\[\br({\cal{P}}({\cal S}))=\{y \in{\cal{R}}^{n} ~|~\exists (Ax+B \bowtie 0)\in {\cal S}: Ay+B=0\}.\]

For each polyhedron $P\in {\cal P}({\cal S})$, we define the set of the boundary points as \[\br(P)=\br({\cal{P}}({\cal S}))\cap P.\]
Note that, depending on the partition,   the set of boundary points of  some polyhedra can be empty. 

\begin{definition}[Boundary element]
 Given a polyhedral partition ${\cal P} ({\cal S})$ and a polyhedron $P\in {\cal P} ({\cal S})$, we say that $\be$ is a boundary element of $P$ if the following holds.
\begin{enumerate}
 \item[(1)] $\be\subseteq \bd(P)$, and  
 \item[(2)] There is  $(Ax+B \bowtie 0)\in {\cal S}$ such that if  $y\in \be$ then $Ay+B=0$.
\end{enumerate}

\end{definition}

An $n$-dimensional  multi-linear system consists of a partitioning ${\cal{P}}({\cal S})=\{\p_1, \dots, \p_m\}$ of the space ${\cal{R}}^{n}$
 into a finite set of polyhedral regions  and a constant derivative $c_i$  assigned to each region $P_i$. 
We define such systems and a trajectory similar to \cite{MP93}.

\begin{definition}[Multi-linear system] We define a  multi-linear system   on ${\cal{R}}^{n}$ as
a pair ${\cal H}=({\cal P},f)$, where $\cal P$ 
is a polyhedral partition of ${\cal{R}}^{n}$ and  $f:{\cal P}\rightarrow {\cal{R}}^{n}$ is a function that assigns a vector $c$ to each $P\in\cal P$.  
\end{definition}

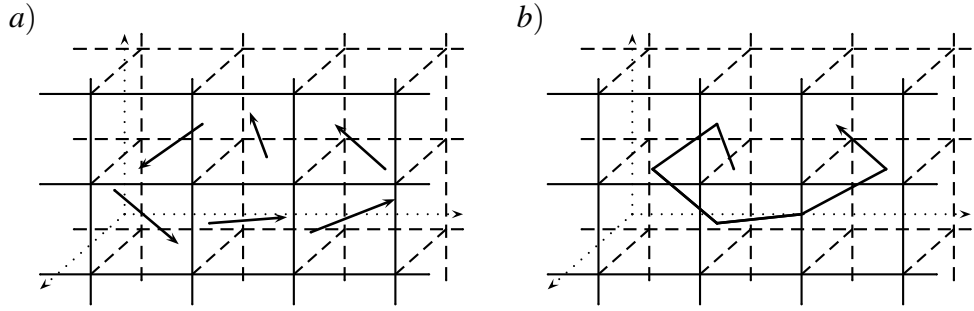
\begin{figure}
\begin{center}

 \psset{xunit=0.45,yunit=0.40}
\begin{pspicture}(0,0)(22,10)

\rput(-2,9.5){\large $a)$}

\rput(13,9.5){\large $b)$}

\psline[linestyle=dotted,arrows=c->](1,3)(11,3)
\psline[linestyle=dotted,arrows=c->](1,3)(1,9)
\psline[linestyle=dotted,arrows=c->](1,3)(-1.5,0.5)

\psline[arrows=c-c](0,0)(0,7.5)
\psline[arrows=c-c](3,0)(3,7.5)
\psline[arrows=c-c](6,0)(6,7.5)
\psline[arrows=c-c](9,0)(9,7.5)

\psline[arrows=c-c](-1.5,1)(10,1)
\psline[arrows=c-c](-1.5,4)(10,4)
\psline[arrows=c-c](-1.5,7)(10,7)

\psline[linestyle=dashed,arrows=c-c](0,1)(1.5,2.5)
\psline[linestyle=dashed,arrows=c-c](3,1)(4.5,2.5)
\psline[linestyle=dashed,arrows=c-c](6,1)(7.5,2.5)
\psline[linestyle=dashed,arrows=c-c](9,1)(10.5,2.5)

\psline[linestyle=dashed,arrows=c-c](0,4)(1.5,5.5)
\psline[linestyle=dashed,arrows=c-c](3,4)(4.5,5.5)
\psline[linestyle=dashed,arrows=c-c](6,4)(7.5,5.5)
\psline[linestyle=dashed,arrows=c-c](9,4)(10.5,5.5)

\psline[linestyle=dashed,arrows=c-c](0,7)(1.5,8.5)
\psline[linestyle=dashed,arrows=c-c](3,7)(4.5,8.5)
\psline[linestyle=dashed,arrows=c-c](6,7)(7.5,8.5)
\psline[linestyle=dashed,arrows=c-c](9,7)(10.5,8.5)

\psline[linestyle=dashed,arrows=c-c](-0.5,2.5)(11,2.5)
\psline[linestyle=dashed,arrows=c-c](-0.5,5.5)(11,5.5)
\psline[linestyle=dashed,arrows=c-c](-0.5,8.5)(11,8.5)

\psline[linestyle=dashed,arrows=c-c](1.5,0.8)(1.5,9)
\psline[linestyle=dashed,arrows=c-c](4.5,0.8)(4.5,9)
\psline[linestyle=dashed,arrows=c-c](7.5,0.8)(7.5,9)
\psline[linestyle=dashed,arrows=c-c](10.5,0.8)(10.5,9)

\psline[linewidth=0.035,arrows=c->](5.2,4.9)(4.7,6.4)
\psline[linewidth=0.035,arrows=c->](3.3,6)(1.4,4.5)
\psline[linewidth=0.035,arrows=c->](0.7,3.8)(2.6,2)

\psline[linewidth=0.035,arrows=c->](3.5,2.7)(5.8,2.9)
\psline[linewidth=0.035,arrows=c->](6.5,2.4)(9,3.5)
\psline[linewidth=0.035,arrows=c->](8.7,4.5)(7.2,6)


\psline[linewidth=0.035,arrows=c-c](19,4.5)(18.5,6)

\psline[linewidth=0.035,arrows=c-c](18.5,6)(16.6,4.5)

\psline[linewidth=0.035,arrows=c-c](16.6,4.5)(18.5,2.7)
\psline[linewidth=0.035,arrows=c-c](16.6,4.5)(18.5,2.7)

\psline[linewidth=0.035,arrows=c-c](18.5,2.7)(21,3)

\psline[linestyle=dotted,arrows=c->](16,3)(26,3)
\psline[linestyle=dotted,arrows=c->](16,3)(16,9)

\psline[linestyle=dotted,arrows=c->](16,3)(13.5,0.5)

\psline[arrows=c-c](15,0)(15,7.5)
\psline[arrows=c-c](18,0)(18,7.5)
\psline[arrows=c-c](21,0)(21,7.5)
\psline[arrows=c-c](24,0)(24,7.5)

\psline[arrows=c-c](13.5,1)(25,1)
\psline[arrows=c-c](13.5,4)(25,4)
\psline[arrows=c-c](13.5,7)(25,7)

\psline[linestyle=dashed,arrows=c-c](15,1)(16.5,2.5)
\psline[linestyle=dashed,arrows=c-c](18,1)(19.5,2.5)
\psline[linestyle=dashed,arrows=c-c](21,1)(22.5,2.5)
\psline[linestyle=dashed,arrows=c-c](24,1)(25.5,2.5)

\psline[linestyle=dashed,arrows=c-c](15,4)(16.5,5.5)
\psline[linestyle=dashed,arrows=c-c](18,4)(19.5,5.5)
\psline[linestyle=dashed,arrows=c-c](21,4)(22.5,5.5)
\psline[linestyle=dashed,arrows=c-c](24,4)(25.5,5.5)

\psline[linestyle=dashed,arrows=c-c](15,7)(16.5,8.5)
\psline[linestyle=dashed,arrows=c-c](18,7)(19.5,8.5)
\psline[linestyle=dashed,arrows=c-c](21,7)(22.5,8.5)
\psline[linestyle=dashed,arrows=c-c](24,7)(25.5,8.5)

\psline[linestyle=dashed,arrows=c-c](14.5,2.5)(26,2.5)
\psline[linestyle=dashed,arrows=c-c](14.5,5.5)(26,5.5)
\psline[linestyle=dashed,arrows=c-c](14.5,8.5)(26,8.5)

\psline[linestyle=dashed,arrows=c-c](16.5,0.8)(16.5,9)
\psline[linestyle=dashed,arrows=c-c](19.5,0.8)(19.5,9)
\psline[linestyle=dashed,arrows=c-c](22.5,0.8)(22.5,9)
\psline[linestyle=dashed,arrows=c-c](25.5,0.8)(25.5,9)

\psline[linewidth=0.035,arrows=c-c](18.5,2.7)(21,3)
\psline[linewidth=0.035,arrows=c-c](21,3)(23.5,4.5)
\psline[linewidth=0.035,arrows=c->](23.5,4.5)(22,6)

\end{pspicture}\caption{\label{Ex_trajectory}  
a) A simple 3-\ml~and b) a possible trajectory}
\end{center}
\end{figure}

In the following we concentrate on $3$-dimensional multi-linear  systems. A simple $3$-dimensional multi-linear system  is depicted in Figure \ref{Ex_trajectory}. 
 The trajectories of such systems are sequences of line segments, where the break points belong to the boundaries of  polyhedra. 
Multi-linear systems are deterministic in a sense that for  each initial point there is exactly one corresponding trajectory. 

We assume that the assigned derivative vectors of two neighboring polyhedra may not be directed towards the same boundary,
since this would lead to Zeno behavior when   a system performs infinitely many transitions in a finite period of time.

In the rest of the paper we use the following notations. By $\epsilon$ we denote the empty sequence. 
We use $s_1.s_2$ to denote the concatenation of sequences $s_1$ and $s_2$,   $(s^i)_{i=1}^m$ is a shortcut for the sequence $s^1.\dots.s^m$. Given a sequence $s$, we denote by $s^{\seq}$  a (possibly infinite) sequence $s. s. \dots$ if  $s$ is repeated at least two times. By $s_1\sqsubset s_2$ we mean that $s_2=s_2'.s_1.s_2''$ for some sequences $s_1,s_2,s_2',s_2''$ and at most one of $s_2'$ and $s_2''$ is not the empty sequence. 

In the following definitions  for simplicity and without loss of generality we can assume that a trajectory always starts at a boundary element. 

\begin{definition}[Trajectory ] Let  $\cal{H}$ be a \ml, and $x_0\in \cal{X}$ be a point.
\begin{enumerate} 
\item A trajectory starting at  $x_0$ is a  sequence  $\traj=x_0.x_1.x_2. \dots$
where for  $i\geq 0$ there is  $P_i\in \cal P$ such that   $x_i\in \bd(P_i)$ and for  $y\in ]x_{i-1},x_{i}[$ there is no $P'\in \cal P$ such that   $y\in \bd(P')$. We denote by $\Trj(\cal H)$ the set of all trajectories of $\cal{H}$. 

\item  A sub-trajectory of $\traj$, written as $\traj^s\sqsubset\traj$,  is a finite (possibly empty) sequence 
$\traj^s= x_i. x_{i+1}.\dots. x_{j}$.  We denote by $\Trj^s$ the set of all sub-trajectories of $\cal{H}$. 

\end{enumerate}
\end{definition}

\begin{definition}[Signature of a trajectory]
Let  $\cal{H}$ be a \ml, and $x_0\in \cal{X}$ be a point. We assume a trajectory  $\traj=x_0.x_1.x_2. \dots$ .  We say that a sequence of boundary elements $\sigma(\traj)=\be_0. \be_1. \be_2. \dots$ is  a signature of $\traj$ if  $x_i\in\be_i$ for $i\geq 0$.  We denote by $\sgn(\cal{H})$ the set of signatures of all trajectories of $\cal{H}$ and by $\ssignat(\cal H)$ the set of signatures of all sub-trajectories of $\cal{H}$.
 \end{definition}

\begin{figure}
\begin{center}
\begin{tikzpicture}[scale=0.40]

\draw[thick, dotted] (0,0) -- (6,0);

\draw[->,thick, color=black!100] (0,0) -- (0.5,0.7) -- (0.7,-0.4) -- (1,0);
\draw[->,thick, color=black!80] (1,0) -- (1.5,0.7) -- (1.7,-0.4) -- (2,0);
\draw[->,thick, color=black!60] (2,0) -- (2.5,0.7) -- (2.7,-0.4) -- (3,0);


\draw[->,thick, color=black!100] (5,0) -- (5.5,0.7) -- (5.7,-0.4) -- (7,1);

\draw[thick, dotted] (7,1) -- (3,5) ;

\draw[->,thick,color=black!100] (7,1) -- (8,0.8) -- (7.7,1.5) -- (6.5,1.5);
\draw[->,thick,color=black!80] (6.5,1.5) -- (7.5,1.3) -- (7.2,2) -- (6,2);
\draw[->,thick,color=black!60] (6,2) -- (7,1.8) -- (6.7,2.5) -- (5.5,2.5);


\draw[->,thick,color=black!100] (3.5,4.5) -- (4.5,4.3) -- (4.2,5) -- (2,5);

\draw[thick, dotted] (2,5) -- (-4,-1) ;

\draw[->,thick,color=black!100] (2,5) -- (2.1,4.2) -- (1.4,4.8) -- (0.8,4.8)  -- (1,4);
\draw[->,thick,color=black!80] (1,4) -- (1.1,3.2) -- (0.4,3.8) -- (-0.2, 3.8)  -- (0,3);
\draw[->,thick,color=black!60] (0,3) -- (0.1,2.2) -- (-0.6,2.8) -- (-1.2, 2.8)  -- (-1,2);

\draw[->,thick,color=black!100] (-3,0) -- (-3.1,-0.7) -- (-3.6,-0.2) -- (-4.2, -0.2)  -- (-4,-1);


\draw[thick, dotted] (-4,-1) -- (9,-1);

\draw[->,thick, color=black!100] (-4,-1) -- (-3.5,-0.3) -- (-3.3,-1.4) -- (-3,-1);
\draw[->,thick, color=black!80] (-3,-1) -- (-2.5,-0.3) -- (-2.3,-1.4) -- (-2,-1);
\draw[->,thick, color=black!60] (-2,-1) -- (-1.5,-0.3) -- (-1.3,-1.4) -- (-1,-1);

\draw[->,thick, color=black!100] (8,-1) -- (8.5,-0.3) -- (8.7,-1.4) -- (10,0);

\draw[thick, dotted] (10,0) -- (3,7);

\draw[->,thick,color=black!100] (10,0) -- (11,-0.2) -- (10.7,0.5) -- (9.5,0.5);
\draw[->,thick,color=black!100] (3.5,6.5) -- (4.4,6.3) -- (4.2,7) -- (2,7);


\draw[thick, dotted] (2,7) -- (-5,0);
\draw[->,thick,color=black!100] (2,7) -- (1.9,6.3) -- (1.4,6.8) -- (0.8,6.8)  -- (1,6);

\end{tikzpicture}\label{HC}
\caption{A schematic representation  of a  hypercycle}
\end{center}
\end{figure}
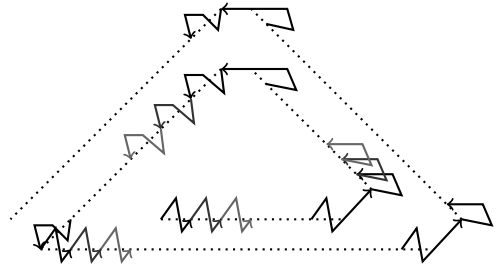

\begin{definition}[Simple trajectory] Let  $\cal{H}$ be a $\ml$.   We say that $\tau\in\Trj^s(\cal{H})$ is a simple  trajectory if  $\sigma(\traj').\sigma(\traj')\not\sqsubset\sigma(\traj)$
 for each $\traj'\in\Trj^s(\cal{H})$ such that $\sigma(\traj')\sqsubset\sigma(\tau)$ and $\traj'\neq \epsilon$.  We denote by $\STrj(\cal H)$ the set of all simple trajectries and by 
$\ssgn(\cal{H})$ the set of signatures of all simple  trajectories of $\cal{H}$.  
\end{definition}

For each multi-linear system,  the number of polyhedra in the corresponding  polyhedral partition is finite. Hence, we conclude that the number of signatures corresponding to the simple trajectories is also finite. 

\begin{lemma}\label{lemma:finite_simple_signature}  For each $\ml$  $\cal{H}$,  $\ssgn(\cal{H})$ is a finite set.  
\end{lemma}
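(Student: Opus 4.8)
The plan is to establish the finiteness of $\ssgn(\cal{H})$ by a combinatorial counting argument over boundary elements, using the defining property of a simple trajectory (no signature of a non-empty sub-trajectory occurs twice consecutively). First I would observe that the number of linear half spaces in $\cal S$ is finite, hence the set of boundary elements occurring across all polyhedra in $\cal P(\cal S)$ is finite; call this set $E$ and let $N=|E|$. Every signature of a (sub-)trajectory is a word over the finite alphabet $E$. So it suffices to bound the \emph{length} of the signature of any simple trajectory, since there are only finitely many words over $E$ of bounded length, and $\ssgn(\cal H)$ is a subset of those.

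Next I would show that the signature $\sigma(\tau)=\be_0.\be_1.\dots.\be_k$ of a simple trajectory cannot repeat \emph{any} single boundary element, i.e.\ the $\be_i$ are pairwise distinct, which immediately gives $k+1\le N$. The key point: suppose $\be_i=\be_j$ for some $i<j$. Then consider the sub-trajectory $\tau'=x_i.\dots.x_{j}$; since the system is deterministic and $x_i,x_j$ lie on the same boundary element, I would argue that the segment of the signature between two equal boundary elements already forces a repeated block. More carefully, I would appeal to determinism: if the signature ever revisits a configuration enough to produce $\sigma(\traj').\sigma(\traj')\sqsubset\sigma(\traj)$ for a non-empty $\traj'$ with $\sigma(\traj')\sqsubset\sigma(\tau)$, then $\tau$ is by definition not simple. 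Hence in a simple trajectory no boundary-element symbol can appear twice (otherwise one extracts such a repeated block from the portion between the two occurrences, or from a slightly longer window), so the length is at most $N$.

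Therefore the set of signatures of simple trajectories is contained in the set of words over $E$ of length at most $N$, which has cardinality at most $\sum_{\ell=0}^{N} N^{\ell}$, a finite number; this proves the lemma.

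The main obstacle I anticipate is making precise the step ``a repeated boundary element yields a repeated signature block.'' The subtlety is that equality of boundary elements $\be_i=\be_j$ does not by itself mean the trajectory is periodic between $i$ and $j$ — the actual points $x_i\neq x_j$ in general — so one must be careful that the definition of simple trajectory is about repeated \emph{signatures} (words), not repeated points. The clean way around this is to not try to show all $\be_i$ are distinct in one shot, but rather to argue directly: if $|\sigma(\tau)|$ exceeds $N$ (or a small polynomial in $N$, to be safe about boundary cases of the $\sqsubset$ relation allowing overhang on only one side), then by pigeonhole some symbol repeats, and a minimal such repetition produces a sub-word $w$ with $w.w$ a factor of $\sigma(\tau)$, contradicting simplicity. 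The bookkeeping of exactly which length threshold triggers the contradiction, given the one-sided-overhang condition in the definition of $\sqsubset$, is the one place needing care; everything else is routine.
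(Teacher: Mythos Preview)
Your overall strategy is the same as the paper's: the paper offers no formal proof, only the sentence immediately preceding the lemma (``the number of polyhedra \dots\ is finite. Hence \dots''), and you attempt to turn that into a length bound on simple signatures over the finite alphabet $E$ of boundary elements.

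The central step, however, does not go through. You claim that once $|\sigma(\tau)|>N$ some boundary element repeats, and that ``a minimal such repetition produces a sub-word $w$ with $w.w$ a factor of $\sigma(\tau)$''. This implication is false in general: over any alphabet with at least three letters there exist arbitrarily long square-free words (Thue), so a repeated letter never by itself forces a square factor, and the pigeonhole argument on symbols cannot bound the length. Your fallback to determinism does not rescue it either. The ML dynamics are deterministic in the \emph{point} $x$, not in the boundary element; two distinct points on the same $\be$ may be sent to different next boundary elements, so $\be_i=\be_j$ does not imply $\be_{i+1}=\be_{j+1}$, and no periodicity of the signature follows from a repeated symbol. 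The difficulty you flag at the end is therefore not mere bookkeeping with the one-sided $\sqsubset$ relation; it is the place where the argument actually breaks.
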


The  notion of a cycle plays an important role in the next  section. Due to the finiteness of the  number of polyhedra in the polyhedral partition of each multi-linear system, each trajectory either reaches a region it never leaves or its subtrajectories  form cycles of boundary elements. 

\begin{definition}[Cycle]  Let  $\cal{H}$ be a $\ml$. We say that  $\traj$, a (sub)trajectory of  $\cal H$,  is a cycle if $\sign(\traj)=\sigma^{\seq}$ for $\sigma\in\ssgn(\cal H)$. We denote by $\Cycle(\cal H)$ the set of all cycles of $\cal H$.
 
\end{definition}

Multi-linear systems for the dimension two have a nice property that makes the analysis  simpler: 
Each trajectory has an ultimately periodic structure, i.e. after finite number of steps
it forms a cycle in terms of visited boundary elements. This property does not hold for higher dimensions.  Therefore, we introduce a   notion of a hypercycle. This  is a generalization of a cycle in the following sense: a hypercycle contains (several) cycles adjoined by simple trajectories. In each iteration of the hypercycle the number of passes through each  cycle may vary but the sequence of visited boundary elements is preserved. 

\begin{definition}[Hypercycle] Let ${\cal{H}}=({\cal P},f)$ be a multi-linear system.
We say that a trajectory $\traj$  is a hypecycle  if $\sigma(\tau)=(\sigma'_1.\sigma^{\seq}_1.\dots .\sigma'_m.\sigma^{\seq}_m)^{\seq}$ for $m\geq 1$, $\sigma_i\in\ssgn(\cal H)$  and $\sigma_i'\in\ssgn(\cal H)\cup\{\epsilon\}$ and at least one of the following holds.
\begin{itemize}
 \item There is $1\leq i\leq m$ such that $\sigma'_i\neq \epsilon$,
\item $m\geq 2$.
\end{itemize}

We denote by $\Hc(\cal H)$ the set of all hypercycles of $\cal H$.
\end{definition}

In fact, the notion of a hypercycle can be generalized further by considering cycles of hypercycles. But in this paper we restrict the class of systems under consideration to the systems such that each trajectory is either a cycle or a hypercycle after finite number of steps.

\section{Deciding Reachability for a Special Class of Multi-Linear Systems }\label{sec:reachability}

In this section we analyze topological properties of a subclass of multi-linear systems. This subclass is defined by a generalization of properties of $2$-dimensional \ml~systems. 

 Namely, we assume  that if there is a cyclic trajectory, then  the points of each cycle iteration intersecting the same boundary element of a polyhedron lie on a straight line, called the $\lambda$-property. A straightforward consequence of this assumption is that the distances between the corresponding boundary points of different rounds are proportional. 

If there is a cyclic trajectory, then  the points of each cycle iteration intersecting the same boundary element of a polyhedron not necessarily lie on a straight line. In general case, even the angle  between the corresponding line segments is not preserved. Nevertheless, we tend to think that for sufficiently many systems, especially for systems having some symmetry in their description,   the trajectories obey the $\lambda$-property.

\subsection{The Reachability Problem}

In the following to be able to perform exact computations, we assume that all coefficients in a system are rationals.

Since a  solution of a differential equation is unique for a given initial point in combination with rationality of coefficients, we obtain the following property. Given a multi-linear system and a rational  initial  point $x$, it is possible to compute  the point $y$ 
reachable  from $x$ after time interval $\Delta t$ exactly.  

\begin{definition}[Reachability problem]
 Given a multi-linear system ${\cal{H}}=({\cal P},f)$ and two points $x$ and $y$, the problem of point-to-point reachability  $\reach({\cal{H}},x,y)$ is stated as follows: Given two points $x,y\in {\cal R}^3$, is there a trajectory $\traj({\cal{H}},x)$ such that   $y\in\traj({\cal{H}},x)$.
\end{definition}

\subsection{Reachability for Multi-Linear $\lambda$-Systems}

Now we define formally a subclass of multi-linear systems we consider. 


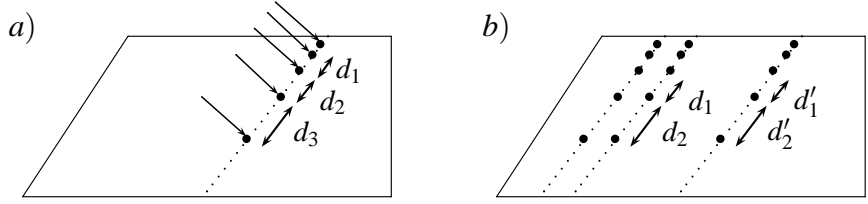
\begin{figure}
\begin{center}

 \psset{xunit=0.7,yunit=0.7}
\begin{pspicture}(0,0)(16,4.5)

\rput(0,3.6){\large $a)$}

\rput(9,3.6){\large $b)$}

\psline[linewidth=0.015,arrows=c-c](0,0.4)(2,3.45)
\psline[linewidth=0.015,arrows=c-c](7,0.4)(7,3.45)
\psline[linewidth=0.015,arrows=c-c](0,0.4)(7,0.4)
\psline[linewidth=0.015,arrows=c-c](2,3.45)(7,3.45)

\psline[linestyle=dotted,arrows=c-c](3.5,0.5)(5.8,3.45)

\psline[arrows=<->](4.54,1.35)(5.13,2.129)
\psline[arrows=<->](5.2,2.18)(5.55,2.62)
\psline[arrows=<->](5.6,2.65)(5.9,3.07)

\rput(6.2,2.8){$d_1$}
\rput(5.9,2.2){$d_2$}
\rput(5.4,1.6){$d_3$}

\psellipse*(4.25,1.5)(0.08,0.08)
\psellipse*(4.9,2.3)(0.08,0.08)
\psellipse*(5.25,2.8)(0.08,0.08)
\psellipse*(5.5,3.1)(0.08,0.08)

\psellipse*(5.65,3.3)(0.08,0.08)

\psline[linewidth=0.02,arrows=c->](3.4,2.3)(4.245,1.55)
\psline[linewidth=0.02,arrows=c->](4.05,3.1)(4.85,2.35)

\psline[linewidth=0.02,arrows=c->](4.4,3.6)(5.245,2.85)
\psline[linewidth=0.02,arrows=c->](4.65,3.9)(5.45,3.15)
\psline[linewidth=0.02,arrows=c->](4.8,4.1)(5.645,3.35)

\psline[linewidth=0.015,arrows=c-c](9,0.4)(11,3.45)
\psline[linewidth=0.015,arrows=c-c](16,0.4)(16,3.45)
\psline[linewidth=0.015,arrows=c-c](9,0.4)(16,0.4)
\psline[linewidth=0.015,arrows=c-c](11,3.45)(16,3.45)

\psline[linestyle=dotted,arrows=c-c](9.9,0.5)(12.2,3.45)
\psline[linestyle=dotted,arrows=c-c](10.5,0.5)(12.8,3.45)
\psline[linestyle=dotted,arrows=c-c](12.5,0.5)(14.8,3.45)

\psellipse*(10.65,1.5)(0.08,0.08)
\psellipse*(11.3,2.3)(0.08,0.08)
\psellipse*(11.7,2.8)(0.08,0.08)
\psellipse*(11.9,3.1)(0.08,0.08)
\psellipse*(12.05,3.3)(0.08,0.08)

\psellipse*(11.25,1.5)(0.08,0.08)
\psellipse*(11.9,2.3)(0.08,0.08)
\psellipse*(12.3,2.8)(0.08,0.08)
\psellipse*(12.5,3.1)(0.08,0.08)
\psellipse*(12.65,3.3)(0.08,0.08)



\psellipse*(13.25,1.5)(0.08,0.08)
\psellipse*(13.9,2.3)(0.08,0.08)
\psellipse*(14.25,2.8)(0.08,0.08)
\psellipse*(14.5,3.1)(0.08,0.08)

\psellipse*(14.65,3.3)(0.08,0.08)

\psline[arrows=<->](13.54,1.35)(14.13,2.129)
\psline[arrows=<->](14.2,2.18)(14.55,2.62)

\rput(14.9,2.2){$d'_1$}
\rput(14.4,1.6){$d'_2$}

\psline[arrows=<->](11.54,1.35)(12.13,2.129)
\psline[arrows=<->](12.2,2.18)(12.55,2.62)

\rput(12.9,2.2){$d_1$}
\rput(12.4,1.6){$d_2$}

\end{pspicture}\caption{\label{prop}  
 $\lambda$-cycle:  a) $d_1/d_2=d_2/d_3$, and  b) $d_1/d_2=d'_1/d'_2$}
\end{center}
\end{figure}


\begin{definition}[$\lambda$-cycle and $\lambda$-line]\label{def:lambda-property} Let ${\cal{H}}=({\cal P},f)$ be a \ml. Suppose for  $\traj\in\Trj^s$ the following holds.
\begin{itemize} 
\item $\traj=(x_{1}^j.\dots.x_s^j)_{j=1}^t$  where for   $1\leq i \leq  s$, $1\leq j\leq t$ there are
$P_{i}\in {\cal P}$ and  $\be_i\in\bd(P_i)$  such that  $x_{i}^j\in \be_i$.
\item $\overrightarrow{x_i^jx_i^{j+1}}=\lambda_i \cdot \overrightarrow{x_i^{j-1}x_i^j}$ for  $x_i^{j-1},x_i^j,x_i^{j+1}$,  $1\leq i\leq s$, $1<j<t$.

\end{itemize} 
Then we say that  $\traj$  is a $\lambda$-cycle.  We say that  a line $\lin_i$ is a $\lambda$-line of $\traj$ with respect to $\be_i$ if  $x_i^{j-1},x_i^j,x_i^{j+1}\in\lin_i$, $1\leq i\leq s$, $1<j<t$. 
\end{definition}

The notion of $\lambda$-cycle  can be extended to a hypercycle.  In the following, given two parallel lines $\lin_1$ and $\lin_2$, we denote by $\dist(\lin_1,\lin_2)$ the distance between $\lin_1$ and $\lin_2$, i.e. the length of a line segment $[x_1,x_2]$ such that $x_1\in\lin_1$, $x_2\in\lin_2$  and $[x_1,x_2]\perp\lin_1$.

\begin{figure}
\begin{center}

 \psset{xunit=0.7,yunit=0.70}
\begin{pspicture}(0,0)(36,4.5)

\psline[linewidth=0.015,arrows=c-c](7,0.4)(11,3.45)
\psline[linewidth=0.015,arrows=c-c](16,0.4)(16,3.45)
\psline[linewidth=0.015,arrows=c-c](7,0.4)(16,0.4)
\psline[linewidth=0.015,arrows=c-c](11,3.45)(16,3.45)

\rput(12.45,3.9){\small $d_1$}
\rput(13.25,3.9){\small $d_2$}
\rput(14.55,3.9){\small $d_3$}

\psline[linewidth=0.02,arrows=<->](12,3.6)(12.5,3.6)
\psline[linewidth=0.02,arrows=<->](12.6,3.6)(13.5,3.6)
\psline[linewidth=0.02,arrows=<->](13.6,3.6)(15.5,3.6)

\psline[linewidth=0.05,linestyle=dotted,arrows=c-c](9.5,0.5)(12,3.43)
\psline[linewidth=0.05,linestyle=dotted,arrows=c-c](10,0.5)(12.5,3.43)
\psline[linewidth=0.05,linestyle=dotted,arrows=c-c](11,0.5)(13.5,3.43)
\psline[linewidth=0.05,linestyle=dotted,arrows=c-c](13,0.5)(15.5,3.43)

\end{pspicture}\caption{\label{fig:lambda_hypercycle}  
 $\lambda$-hypercycle:  The ratio of the distances between $\lambda$-lines of consequitive rounds of a hypercycle is preserved: $d_1/d_2=d_2/d_3$}
\end{center}
\end{figure}
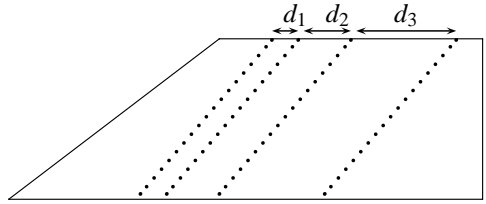

\begin{definition}[$\lambda$-hypercycle] Let ${\cal{H}}=({\cal P},f)$ be a \ml.
 Suppose a trajectory $\tau$ is a hypercycle, i.e.  $\sigma(\tau)=(\sigma'_1.\sigma^{\seq}_1.\dots .\sigma'_m.\sigma^{\seq}_m)^{\seq}$ for $\sigma_i,\sigma_i'\in\ssgn(\cal H)$. Let for $1\leq i\leq m$,   $\be\in\sigma_i$,  and $\lin_{b-2}$, $\lin_{b-1}$, $\lin_{b}$ and $\lin_{b+1}$  are $\lambda$-lines for $e$ of the corresponding consecutive rounds $b-2$, $b-1$, $b$ and $b+1$ of $\tau$.  We say that $\tau$ is a $\lambda$-hypercycle if 
\[\frac{\D_{b-1}}{\D_{b}}=\frac{\D_{b}}{\D_{b+1}},\] where $\D_a=\dist(\lin_{a-1},\lin_{a})$, $b-2<a\leq b+1$.
\end{definition}

\begin{definition}[$\lambda$-system]  Let ${\cal{H}}=({\cal P},f)$ be a \ml. We say that ${\cal{H}}$ is a $\lambda$-system if for each $\traj$ the following holds: 
1)  If $\traj$ is a cycle then $\traj$ is a $\lambda$-cycle.
2)  If $\traj$ is a hypercycle then $\traj$ is a $\lambda$-hypercycle.
\end{definition}

As the next step, we  define computable properties that would allow us to check whether a system is a $\lambda$-system.
Lemma \ref{prop:traj} defines conditions sufficient for a cycle to be a  $\lambda$-cycle:  As soon as three points of consecutive  cycle iterations  lie on a straight line, the ratio of the distances between  consecutive points of different rounds is preserved. 

\begin{lemma}\label{prop:traj} Let ${\cal{H}}=({\cal P},f)$ be a  \ml. Suppose $\sigma(\traj)=\sigma^{\seq}$ for $\traj\in\Trj^s$ and  $\sigma\in\ssgn(\cal H)$. Assume that $x_1,x_2,x_3\in\be$,  $\be\in\sigma$,  are consecutive points of intersection of $\traj$ and $\be$. If $x_1,x_2,x_3\in\lin$ for some line $\lin$ then  $\traj$ is a $\lambda$-cycle. 
\end{lemma}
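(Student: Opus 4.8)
The plan is to exploit the fact that, along a cycle whose signature equals $\sigma^{\seq}$, the sequence of polyhedra traversed -- and hence the sequence of constant derivatives -- repeats verbatim in every round, so that the first-return map of the cycle is an \emph{affine} self-map of a hyperplane. The collinearity hypothesis then translates into the statement that the relevant step vector is an eigenvector of the linear part of that map, and the $\lambda$-property falls out by iterating it.

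\textbf{Step 1 (the return map is affine).} First I would fix one period of $\traj$, relabel the boundary elements it visits as $\be_1,\dots,\be_s$ with $\be=\be_1$, and record that each $\be_i$ lies in a hyperplane $H_i:A_iy+B_i=0$ (the existence of such a constraint is part of the definition of a boundary element). Since $\sigma(\traj)=\sigma^{\seq}$, the polyhedron traversed on the leg from $\be_i$ to $\be_{i+1}$, and hence the derivative $c_i$ used on it, is the same in every round. The point reached from $x\in H_i$ along that leg is $x+tc_i$ with $t=-(A_{i+1}x+B_{i+1})/(A_{i+1}c_i)$, so the transition map $T_i:H_i\to H_{i+1}$, $T_i(x)=x-\frac{A_{i+1}x+B_{i+1}}{A_{i+1}c_i}\,c_i$, is affine and (since $A_{i+1}c_i\neq 0$) invertible. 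Writing $\Psi_i:=T_{i-1}\circ\cdots\circ T_1$ (with $\Psi_1=\mathrm{id}$) and $\Phi:=T_s\circ\cdots\circ T_1:H_1\to H_1$, all of these are affine; let $M$ be the linear part of $\Phi$ and $M_i$ that of $\Psi_i$. By determinism of $\cal H$ together with periodicity of the signature, the crossing points satisfy $x_1^{j+1}=\Phi(x_1^j)$ and $x_i^{j}=\Psi_i(x_1^j)$, where $x_i^j$ denotes the $j$-th intersection of $\traj$ with $\be_i$.

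\textbf{Step 2 (eigenvector, and $\lambda$-property on the whole cycle).} I would take the three given points to be $x_1^1,x_1^2,x_1^3$ (a cyclic shift of the period lets me assume they are the first three visits to $\be$). Affineness gives $x_1^{j+1}-x_1^{j}=M^{j-1}(x_1^2-x_1^1)$. If $x_1^2=x_1^1$ the cycle is genuinely periodic and is trivially a $\lambda$-cycle, so set $w:=x_1^2-x_1^1\neq 0$. Collinearity of $x_1^1,x_1^2,x_1^3$ means $x_1^3-x_1^2=\lambda w$ for some scalar $\lambda$; as $x_1^3-x_1^2=Mw$, this says $Mw=\lambda w$. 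Hence $x_1^{j+1}-x_1^{j}=\lambda^{j-1}w$, i.e. $\overrightarrow{x_1^{j}x_1^{j+1}}=\lambda\cdot\overrightarrow{x_1^{j-1}x_1^{j}}$, and all the $x_1^j$ lie on the line through $x_1^1$ with direction $w$, which is the $\lambda$-line $\lin_1$. For $2\le i\le s$, applying $\Psi_i$ yields $x_i^{j+1}-x_i^{j}=M_i(x_1^{j+1}-x_1^{j})=\lambda^{j-1}M_iw$, so the same relation $\overrightarrow{x_i^{j}x_i^{j+1}}=\lambda\cdot\overrightarrow{x_i^{j-1}x_i^{j}}$ holds and the points $x_i^j$ all lie on the line through $x_i^1$ in direction $M_iw$ (a single point when $M_iw=0$), which serves as $\lin_i$. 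Therefore every boundary element of $\sigma$ satisfies the requirement of Definition~\ref{def:lambda-property} with the \emph{common} ratio $\lambda_i=\lambda$, and $\traj$ is a $\lambda$-cycle.

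I expect the only genuinely delicate point to be Step 1: one has to check carefully that the time-of-flight is an affine function of the starting point, and, more importantly, that each denominator $A_{i+1}c_i$ is nonzero -- equivalently, that every leg of the cycle crosses the next hyperplane transversally. This should follow from the standing well-posedness (no-Zeno) assumptions on multi-linear systems, since a derivative parallel to the hyperplane it is about to cross could not produce the observed crossing. Once the return map is known to be affine, the remainder is elementary linear algebra about iterating an affine map after a single eigenvector has been exhibited, and the transfer from $\be_1$ to all the other boundary elements is immediate by composition of the affine transition maps.
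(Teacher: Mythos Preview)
The paper as given here does not actually contain a proof of this lemma; it defers all proofs to the extended version~\cite{TF11-2}. So there is nothing to compare against line by line, and your argument has to be judged on its own.

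Your proof is correct and is the natural one. The crucial observation---that the one-step transition $T_i:H_i\to H_{i+1}$, $x\mapsto x-\dfrac{A_{i+1}x+B_{i+1}}{A_{i+1}c_i}\,c_i$, is affine because the time of flight depends affinely on the entry point---is exactly what drives this kind of result for piecewise-constant-derivative systems, and composing these maps to get an affine first-return map $\Phi$ with linear part $M$ is the right move. Your Step~2 is clean: collinearity of $x_1^1,x_1^2,x_1^3$ says $Mw=\lambda w$ for $w=x_1^2-x_1^1$, iteration gives $x_1^{j+1}-x_1^{j}=\lambda^{j-1}w$, and pushing forward by the partial compositions $\Psi_i$ transfers both the collinearity and the \emph{same} ratio $\lambda$ to every other boundary element of the cycle. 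This is precisely what the paper's Figure~\ref{prop} is depicting ($d_1/d_2=d_2/d_3$ on one face and $d_1/d_2=d_1'/d_2'$ across faces), so your conclusion that all the $\lambda_i$ coincide is the intended one, and it matches Definition~\ref{def:lambda-property}.

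Two minor remarks. First, your worry about $A_{i+1}c_i\neq 0$ is adequately handled: since the trajectory \emph{does} reach $\be_{i+1}$ from $\be_i$ in finite time along direction $c_i$, the derivative cannot be parallel to $H_{i+1}$, so the denominator is nonzero on every leg that actually occurs in $\sigma$. Second, strictly speaking $\Phi$ is only guaranteed to agree with the true dynamics at the points $x_1^j$ of the given trajectory (elsewhere the trajectory might leave the boundary elements prescribed by $\sigma$), but that is all you use, so the argument is sound. The degenerate case $w=0$ (periodic orbit) you dispose of correctly; the case $M_iw=0$ for some $i$ would force $x_i^j$ constant in $j$, hence $x_1^j$ constant as well since the $T_i$ are injective, so it does not actually arise once $w\neq 0$---you may want to say that explicitly rather than leave the $\lambda$-line as ``a single point''.
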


As we see in Lemma \ref{lem:lambda-prop}, it is sufficient  for the $\lambda$-property to hold for two distinct trajectories going through the same  cycle of boundary elements.  Then it holds for each cycling trajectory
going through the same cycle. Note that it is sufficient to compute for each trajectory whether three points of the consecutive rounds are in one line.

\begin{lemma}\label{lem:lambda-prop} Let ${\cal{H}}=({\cal P},f)$ be a \ml. Suppose $\sigma(\traj_i)=\sigma^{\seq}$ for a trajectory $\traj_i$, $1\leq i\leq 3$.
Assume that  $\traj_1$ and $\traj_2$ are $\lambda$-cycles. Then the following holds.
\begin{itemize}
 \item $\traj_3$ is a $\lambda$-cycle. 
\item $\lin^e_1~||~\lin^e_2~||~\lin^e_3$, where $\lin^e_i$ is a  $\lambda$-line of $\traj_i$ with respect to $\be$ for each $\be\in\sigma$. 
\end{itemize} 
\end{lemma}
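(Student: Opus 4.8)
The plan is to reduce the statement to a spectral fact about a single affine map. By symmetry it suffices to argue at one boundary element, so fix $\be=\be_1$ in the cyclic signature $\sigma=\be_1.\be_2.\dots.\be_s$; it lies in a hyperplane $H$ with $2$-dimensional direction space $V$. Along $\sigma$, the passage from $\be_j$ to $\be_{j+1}$ is a flow with the constant derivative of the region in between, and the time needed to reach the next hyperplane is an affine function of the starting point; hence each transition map $T_j$, restricted to the convex polyhedral set of starting points that actually realise this transition, is the restriction of an affine map $H_j\to H_{j+1}$. Composing around the cycle, the first-return map $R=T_s\circ\dots\circ T_1$ on $\be$ (defined on the set $U\subseteq\be$ of points whose trajectory follows one full round of $\sigma$) is the restriction of an affine map; write $\bar M\colon V\to V$ for its linear part. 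For any trajectory $\traj$ with $\sigma(\traj)=\sigma^{\seq}$ and successive intersection points $x^1,x^2,\dots$ with $\be$ we have $x^{j+1}=R(x^j)$, hence $x^{j+1}-x^j=\bar M^{j-1}(x^2-x^1)$; combining this with Lemma~\ref{prop:traj} shows that $\traj$ is a $\lambda$-cycle precisely when $u:=x^2-x^1$ is an eigenvector of $\bar M$, in which case the eigenvalue is the ratio $\lambda$ and $\mathrm{span}(u)$ is the direction of the $\lambda$-line $\lin^{\be}$ of $\traj$ at $\be$. (If $u=0$ the trajectory is a periodic orbit and the claim is immediate by determinism.)

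Now apply this to $\traj_1$ and $\traj_2$: we obtain eigenvectors $u_1,u_2\in V$ of $\bar M$ with eigenvalues $\lambda_1,\lambda_2$. The heart of the proof is to deduce, from the existence of \emph{two} distinct $\lambda$-cycles through $\sigma$, that $u_1$ and $u_2$ are parallel, that $\lambda_1=\lambda_2=:\lambda$, and that $\bar M$ then acts on $V$ in such a way that every step vector $x^2-x^1$ of a $\sigma^{\seq}$-trajectory is forced into this common direction with ratio $\lambda$. This amounts to a finite case analysis on the Jordan type of the $2\times2$ matrix $\bar M$ (two distinct real eigenvalues; a repeated eigenvalue with a one- or two-dimensional eigenspace; a complex-conjugate pair), where the configurations incompatible with the conclusion are excluded precisely because we have two $\lambda$-cycles rather than one (using determinism to guarantee $\traj_1\neq\traj_2$, and Lemma~\ref{prop:traj} to dispose of the boundary subcases). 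Granting this, $\traj_3$ inherits the eigenvector condition, hence is a $\lambda$-cycle by Lemma~\ref{prop:traj}, with $\lin^{\be}_3\parallel\lin^{\be}_1\parallel\lin^{\be}_2$; and since the transition maps $T_j$ conjugate the return map at $\be_j$ to the one at $\be_{j+1}$ and carry $\lambda$-line directions to $\lambda$-line directions, the conclusion holds simultaneously at every $\be\in\sigma$.

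The single genuinely delicate point is this case analysis on the eigenstructure of $\bar M$: the benign cases --- $\bar M$ a homothety or a translation of $H$ --- give the statement at once, whereas the cases with distinct eigenvalues or with a nontrivial Jordan block are the ones that must be ruled out, and I would expect essentially all the real work, and all the appeals to determinism and to Lemma~\ref{prop:traj}, to be concentrated there.
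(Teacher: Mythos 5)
Your reduction to an affine first-return map is sound, and it is the natural framework here: for a fixed cyclic signature each inter-hyperplane transition is the restriction of an affine map (the crossing time depends affinely on the entry point), so the return map $R$ on the set of points of $\be$ realising $\sigma^{\seq}$ is affine with linear part $\bar M$, a trajectory with step vector $u=x^2-x^1\neq 0$ is a $\lambda$-cycle exactly when $u$ is an eigenvector of $\bar M$ (this is essentially Lemma~\ref{prop:traj}), and conjugation by the transition maps transports everything to the other elements of $\sigma$. Note also that the conference version of the paper defers all proofs to the complete version, so there is no in-text proof to compare your route against; I can only assess the plan on its own terms.

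On those terms there is a genuine gap, and it sits exactly where you yourself place ``essentially all the real work'': the claim that a case analysis on the Jordan type of $\bar M$, fed only with the existence of two distinct $\lambda$-cycles, forces $u_1\parallel u_2$, $\lambda_1=\lambda_2$, and forces every other step vector into that common direction. At the level of affine maps this implication is false. If $\bar M$ has two distinct real eigenvalues $\lambda_1\neq\lambda_2$ (say $\lambda_2\neq 1$), the set of starting points whose step vector $(\bar M-I)x+c$ lies in the first eigendirection is an $R$-invariant line in the face; it carries infinitely many pairwise distinct $\lambda$-cycles, so the hypothesis of the lemma is met (with equal, hence parallel, $\lambda$-lines), while a generic trajectory with the same signature has a step vector with nonzero components along both eigendirections and is not a $\lambda$-cycle; picking one $\lambda$-cycle along each eigendirection instead even produces non-parallel $\lambda$-lines. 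Moreover the case you label benign, $\bar M=\lambda I$ with $\lambda\neq 1$, does not give the statement at once: there every trajectory is a $\lambda$-cycle, but the $\lambda$-lines are concurrent at the fixed point of $R$ rather than parallel, so the second bullet would fail. Consequently no argument that sees only the Jordan form of $\bar M$ plus ``two $\lambda$-cycles exist'' can close the proof; what is needed, and what your proposal neither identifies nor uses, is additional geometric input showing that such configurations cannot actually be realised by the return maps of a $3$-dimensional \ml~system (whose transition maps are projections along the region derivatives, constraining which pairs of linear part, translation part and polyhedral domain occur together with the required trajectories). Until that input is supplied, the central step of your plan is a restatement of the lemma rather than a proof of it.
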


\begin{lemma}\label{lemma:lambda_hypercycle} Let ${\cal{H}}=({\cal P},f)$ be a \ml.  Suppose  each $\tau\in\Cycle(\cal H)$  is a $\lambda$-cycle. Then each $\tau'\in\Hc(\cal H)$  is a $\lambda$-hypercycle. 
\end{lemma}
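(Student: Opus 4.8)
<br>

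The plan is to reduce Lemma \ref{lemma:lambda_hypercycle} to the already-established cycle theory (Lemmas \ref{prop:traj} and \ref{lem:lambda-prop}) by analysing a single hypercycle round and then iterating. Fix a hypercycle $\tau$ with signature $\sigma(\tau)=(\sigma'_1.\sigma^{\seq}_1.\dots.\sigma'_m.\sigma^{\seq}_m)^{\seq}$, and fix $1\leq i\leq m$ together with a boundary element $\be\in\sigma_i$. The first step is to isolate, within each round $b$ of $\tau$, the maximal sub-trajectory whose signature is the power $\sigma_i^{\seq}$; this is an honest (finite) cycling trajectory through the cycle $\sigma_i\in\ssgn(\cal H)$. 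By hypothesis every element of $\Cycle(\cal H)$ is a $\lambda$-cycle, so each such sub-trajectory is a $\lambda$-cycle; in particular, by Definition \ref{def:lambda-property} the intersection points with $\be$ in that round lie on a single line, the $\lambda$-line $\lin_b$, and consecutive displacement vectors along $\lin_b$ are scaled by the same factor $\lambda_\be$ (which, by the proof structure of Lemma \ref{lem:lambda-prop}, is an intrinsic constant of the cycle $\sigma_i$, independent of which cycling trajectory we took). From Lemma \ref{lem:lambda-prop} applied to the cycling sub-trajectories of two different rounds we also get $\lin_{b}\parallel\lin_{b+1}$ for all $b$, so the quantities $\D_a=\dist(\lin_{a-1},\lin_a)$ are well defined.

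The second step is to track how the entry point into the $\sigma_i$-block moves from one hypercycle round to the next, i.e. to understand the map sending the first intersection point with $\be$ in round $b$ to the first such point in round $b+1$. Between these two events the trajectory runs through the tail of round $b$ (the remaining blocks $\sigma'_{i+1}\sigma^{\seq}_{i+1}\dots\sigma'_m\sigma^{\seq}_m$, then wraps, then $\sigma'_1\sigma^{\seq}_1\dots\sigma'_i$). Because the multi-linear dynamics is piecewise affine and the combinatorial type (the signature) of this connecting path is fixed, the composite return map is an affine map $R$ of the plane supporting $\be$; moreover a cycling block $\sigma_j^{\seq}$ visited along the way contributes, on its own $\be$-line, the contraction/expansion by its intrinsic ratio $\lambda_j$, and affine interpolation through the other (non-repeated) segments is genuinely affine. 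The key claim is that $R$ maps $\lin_b$ to $\lin_{b+1}$ and, restricted to the pencil of $\lambda$-lines of $\be$, acts as an affine contraction with a fixed ratio $\rho$ — equivalently, $R$ is an affinity whose linear part has $\lin$'s direction as an eigendirection with eigenvalue $\rho$ in the transverse coordinate. Granting this, $\D_{b+1}=|\rho|\cdot\D_b$ for all $b$, hence $\D_{b-1}/\D_b=\D_b/\D_{b+1}=1/|\rho|$, which is exactly the defining equation of a $\lambda$-hypercycle; applying this for every $i$ and every $\be\in\sigma_i$ finishes the proof.

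The main obstacle is justifying that the return map $R$ has the $\lambda$-line direction as an eigendirection with a \emph{round-independent} ratio $\rho$, rather than merely being some affine map. Two sub-points need care here. First, one must show the connecting path has a fixed signature across rounds; this is where the hypothesis that every trajectory of $\cal H$ is eventually a cycle or a hypercycle (stated after Definition \ref{def:lambda-property}) and the definition of hypercycle — same sequence of visited boundary elements each iteration — are used, so that all maps in the composition are the same affine maps from round to round. Second, one must propagate the one-dimensional $\lambda$-structure through the composition: on entering a block $\sigma_j^{\seq}$ the image point already lies on the ($j$-th) $\lambda$-line, the block scales the offset along that line by the intrinsic $\lambda_j$ of Lemma \ref{lem:lambda-prop}, and the affine segue to the next block sends lines to lines preserving the relevant ratios; composing, the offset of the entry point of the $\sigma_i$-block from the ``limit point'' of the pencil is multiplied by a fixed product $\rho=\prod_j \lambda_j^{\,k}$-type constant. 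I would argue this by a short induction on the blocks in one round, invoking Lemma \ref{prop:traj}/Lemma \ref{lem:lambda-prop} to supply, for each block, both the collinearity of the three relevant points and the fixed ratio, and invoking the determinism of $\cal H$ to guarantee the three rounds $b-2,b-1,b,b+1$ in the statement are genuinely consecutive iterations of the same map. The remaining computations (that an affine map with an invariant line direction and eigenvalue $\rho$ scales inter-line distances by $|\rho|$, and the elementary algebra $\D_{b-1}/\D_b=\D_b/\D_{b+1}$) are routine.
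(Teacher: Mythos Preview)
The paper itself does not contain a proof of Lemma~\ref{lemma:lambda_hypercycle}; all lemma proofs are deferred to the full version~\cite{TF11-2}. So there is nothing to compare against directly, and I can only assess your argument on its own merits.

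Your proof has a genuine gap at precisely the point you flag as the main obstacle. You assert that ``the connecting path has a fixed signature across rounds'' and justify this by appealing to the definition of a hypercycle as having the ``same sequence of visited boundary elements each iteration''. But this reading is incorrect: the paper explicitly says (just before the definition of hypercycle) that ``in each iteration of the hypercycle the number of passes through each cycle may vary'', and the notation $\sigma_j^{\seq}$ only means $\sigma_j$ is repeated \emph{at least} twice, with no bound or uniformity on the count. Indeed the paper's own example system $\mathcal H_2$ is built so that the number of inner-cycle iterations \emph{grows} with each hypercycle round. Consequently the round-to-round return map $R$ you construct is not a fixed affine map: the contribution of a $\sigma_j^{\seq}$ block is $\lambda_j^{k_j(b)}$ where the exponent $k_j(b)$ depends on the round $b$, and your product ``$\rho=\prod_j\lambda_j^{\,k}$'' is not a constant.

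What this means concretely is that your induction over blocks, which feeds the output of one affine map into the next, breaks down: the composite is a different map in each round, so the conclusion $\D_{b+1}=|\rho|\cdot \D_b$ with a single $\rho$ does not follow. A repair would have to argue at the level of the $\lambda$-lines themselves rather than individual points: the exit from a $\sigma_j^{\seq}$ block happens along a fixed edge of the relevant boundary element, and (using Lemma~\ref{lem:lambda-prop}) the family of parallel $\lambda$-lines is mapped to another parallel family by the (genuinely affine) simple connectors $\sigma'_j$; one then needs to show that the transverse spacing of these families transforms by a fixed ratio independent of how many inner iterations were taken. That is a different and more delicate computation than the one you sketched, and it is where the real content of the lemma lies.
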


\begin{theorem}
\label{theorem:criterion} Let ${\cal{H}}=({\cal P},f)$ be a multi-linear system. Then  it is decidable whether ${\cal{H}}$ is a $\lambda$-system.
\end{theorem}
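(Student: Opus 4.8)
The plan is to reduce the question to finitely many effective geometric tests, one per cyclic signature, and then to observe that each test amounts to deciding whether a polynomial of bounded degree vanishes on a computable polyhedron. First, by Lemma~\ref{lemma:lambda_hypercycle} the hypercycle clause in the definition of a $\lambda$-system is automatic once all cycles are $\lambda$-cycles, so ${\cal H}$ is a $\lambda$-system if and only if every $\traj\in\Cycle({\cal H})$ is a $\lambda$-cycle. Every cycle has signature $\sigma^{\seq}$ for some $\sigma\in\ssgn({\cal H})$, and by Lemma~\ref{lemma:finite_simple_signature} there are only finitely many such $\sigma$. Hence it suffices to show that, for a fixed $\sigma=\be_1.\be_2.\dots.\be_s$, one can decide whether every cycle with signature $\sigma^{\seq}$ is a $\lambda$-cycle.

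Second, I would build the first-return map along $\sigma$. Each $\be_i$ lies on a hyperplane, so its affine hull has dimension at most two in ${\cal R}^3$, and inside each polyhedron $f$ prescribes a constant derivative; consequently the flow from $\be_i$ to $\be_{i+1}$ along the derivative of the region sandwiched between them is an affine map ($x_i\mapsto x_i-\frac{Ax_i+B}{Ac}\,c$), defined on the polyhedron cut out by the linear constraints of the region together with the requirement that the hyperplane of $\be_{i+1}$ be met strictly before any other boundary. Composing the $s$ maps yields an affine return map $R_\sigma(x)=Mx+b$ with rational $M,b$, defined on a computable polyhedron $D_\sigma\subseteq\be_1$; and since the signature is $\sigma$ each round, iterating $\traj$ corresponds exactly to iterating $R_\sigma$. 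The starting points on $\be_1$ of cycles through $\sigma$ that are long enough for the $\lambda$-cycle relation to be non-vacuous (three or more rounds) form a computable polyhedron $\widehat D_\sigma$ (essentially $D_\sigma\cap R_\sigma^{-1}(D_\sigma)$, refined by the constraint that the last partial loop be completed as far as $\be_s$); for $x\in\widehat D_\sigma$ the first three intersections of the trajectory with $\be_1$ are $x,R_\sigma(x),R_\sigma^2(x)$. Writing $v(x)=(M-I)x+b$ one has $R_\sigma(x)-x=v(x)$ and $R_\sigma^2(x)-R_\sigma(x)=Mv(x)$, so these three points are collinear iff $v(x)$ and $Mv(x)$ are parallel — a condition expressed by the vanishing of a fixed polynomial $p_\sigma$ of degree at most two on the affine hull of $\be_1$ (identically zero precisely when $M$ acts on $\be_1$ as a homothety).

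Third, I would combine these. By Lemma~\ref{prop:traj} collinearity of three consecutive boundary points forces the full $\lambda$-cycle property, and conversely Definition~\ref{def:lambda-property} builds collinearity into the relation $\overrightarrow{x_i^jx_i^{j+1}}=\lambda_i\cdot\overrightarrow{x_i^{j-1}x_i^j}$; hence a cycle through $\sigma$ with at least three rounds and starting point $x$ on $\be_1$ is a $\lambda$-cycle iff $p_\sigma(x)=0$, while shorter cycles are $\lambda$-cycles vacuously. (One checks, using $v(R_\sigma(x))=Mv(x)$, that the condition $p_\sigma=0$ propagates along $R_\sigma$-orbits, so no further constraints arise from longer cycles or from shifted sub-cycles.) Therefore every cycle through $\sigma$ is a $\lambda$-cycle iff $p_\sigma$ vanishes on $\widehat D_\sigma$, and this is decidable: compute $\mathrm{aff}(\widehat D_\sigma)$; since $\widehat D_\sigma$ is full-dimensional in that hull, $p_\sigma$ vanishes on $\widehat D_\sigma$ iff its restriction to $\mathrm{aff}(\widehat D_\sigma)$ is the zero polynomial (the empty and single-point cases being trivial). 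Conjoining this test over the finitely many $\sigma$ decides whether ${\cal H}$ is a $\lambda$-system. As an efficiency shortcut one may instead invoke Lemma~\ref{lem:lambda-prop}: pick two distinct points of $\widehat D_\sigma$, if they exist, evaluate $p_\sigma$ only there, and propagate.

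I expect the real work to be bookkeeping rather than conceptual: making the construction of $R_\sigma$ and $D_\sigma$ from the rational data of ${\cal H}$ fully precise (the "strictly before any other boundary" clause, unbounded regions, degenerate segments), and pinning down exactly which sub-trajectories count as cycles and which loop counts make the $\lambda$-cycle relation non-vacuous, so that the reduction to "$p_\sigma$ vanishes on $\widehat D_\sigma$" is airtight. Everything beyond that is linear algebra over the rationals together with the elementary fact that a polynomial vanishing on a full-dimensional polyhedron vanishes on its affine hull.
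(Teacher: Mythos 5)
Your proposal is correct and shares the paper's skeleton --- finiteness of $\ssgn({\cal H})$ (Lemma \ref{lemma:finite_simple_signature}), the reduction of hypercycles to cycles via Lemma \ref{lemma:lambda_hypercycle}, and the reduction of the $\lambda$-cycle property to collinearity of three consecutive returns via Lemma \ref{prop:traj} --- but your final decidability mechanism is genuinely different. The paper handles the universal quantification over all cyclic trajectories with a fixed signature by Lemma \ref{lem:lambda-prop}: it suffices to pick two distinct witness trajectories $\tau_1,\tau_2$ with signature $\sigma^{\seq}$, compute three consecutive intersection points of each with some $\be\in\sigma$, and test collinearity. You instead build the first-return map $R_\sigma(x)=Mx+b$ symbolically (each hyperplane crossing $x\mapsto x-\frac{Ax+B}{Ac}\,c$ is affine with rational data), note that collinearity of $x$, $R_\sigma(x)$, $R_\sigma^2(x)$ is the vanishing of $v(x)\times Mv(x)$ with $v(x)=(M-I)x+b$, and decide whether this vanishes on the polyhedral domain $\widehat D_\sigma$ by restricting to its affine hull --- a polynomial identity test over the rationals. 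What each buys: the paper's route is lighter and mirrors the implemented algorithm (three simulated rounds of two trajectories), but its step ``choose two distinct $\tau_1$ and $\tau_2$'' is left non-effective and tacitly assumes two such cyclic trajectories exist; your route makes effectiveness explicit, covers the degenerate cases (empty or single-point $\widehat D_\sigma$, i.e.\ no or only one cyclic trajectory through $\sigma$) uniformly, and needs Lemma \ref{lem:lambda-prop} only as an optional shortcut. The bookkeeping you flag (the strict-before-any-other-boundary constraints defining $D_\sigma$, the minimal number of rounds making the $\lambda$-condition non-vacuous) is real but routine; note only that $v(x)\times Mv(x)$ is a vector of three quadratics, so the identity test should be performed component-wise, which does not affect the argument.
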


\begin{proof}
 By Lemma  \ref{lemma:finite_simple_signature}, the set $\ssgn(\cal H)$ is finite.  By  Lemmas \ref{prop:traj}, \ref{lem:lambda-prop} and \ref{lemma:lambda_hypercycle} it is sufficient to perform the following steps:
1) For each $\sigma\in\ssgn$ to choose two distinct $\tau_1$ and $\tau_2$ such that $\sigma(\tau_1)=\sigma^{\seq}$ and  $\sigma(\tau_2)=\sigma^{\seq}$.  2) To compute  three consecutive points of intersection of $\tau_1$ and $\tau_2$ with $\be\in\sigma$. 3) To check whether these points are  in one line.    
\end{proof}

We have shown that if each  cycle is a $\lambda$-cycle then the given system is a $\lambda$-system. The algorithm to check whether a (hyper)cycle is infinite is presented in the next section and it  is an extension of the $2$-dimensional case  from \cite{MP93}.


\section{Algorithm for Point-to-Point Reachability}\label{sec:algorithm}

\begin{newalgo}{Point-to-Point Reachability}{alg:ptp_reachability}
  \INPUT points $x_0, y \in \mathcal{R}^3$, \ml~system $\mathcal{H}$, maximal simulation steps $n \in \mathcal{N}$
  \OUTPUT $\exists \traj({\cal H},x_0) = x_0,x_1, \dots, y$
  
  \medskip
  
  \STATE $y^\prime = \mathcal{H}(y)$ \COMMENT{$y \in P_i$ and $y^\prime \in \bd(P_i)$ for some partition $i$}
  \STATE $x \leftarrow x_0 \qquad k \leftarrow 0$
  
  \WHILE {$k \leq n$}
    \STATE $x \leftarrow \mathcal{H}(x) \qquad k \leftarrow k + 1$
    \IF {cycle $\cycle=(x_{i}, \dots, x_{i+s})^{\seq}$ detected}
      \IF [boundary element of $y$ is in cycle] {$\bd(y^\prime) = \bd(x_b) \in \cycle$}
	\IF  {$y^\prime = x_b^1 + t \cdot \overrightarrow{x_b^1x_b^2}$ for $t = \frac{1-\lambda_b^k}{1-\lambda_b} | k \in \mathcal{N}^+$}
	  \RETURN \TRUE
	\ENDIF 
      \ENDIF \COMMENT{$y^\prime$ is not reached by cycle $\cycle$}
      
      \IF{$\operatorname{isinfinite}(\cycle)$}
	\RETURN \FALSE
      \ELSE
	\STATE $x \leftarrow \operatorname{exitPoint}(\cycle)$
      \ENDIF
    \ENDIF \COMMENT{cycle detected}
  \ENDWHILE
  
  \RETURN $y^\prime \in \traj({\cal H},x_0)$
\end{newalgo}

\begin{newalgo}{Infinity test}{alg:infinity}
  \INPUT cycle $\cycle=(x_{i}^j, \dots, x_{i+s}^j)_{j=1}^3$
  \OUTPUT $\cycle$ is infinite cycle
  
  \medskip
  
  \FORALL{$x_b^j \in \cycle$}
    \FORALL{$e \in \bd(P_b)$}
	\IF [trace intersects \ensuremath{e}] {$e \cap \{x_b^1 + t \cdot \overrightarrow{x_b^1x_b^2 | t \in \mathcal{R}^+}\} \neq \emptyset$}
	  \IF{$\lambda_b \geq 1$}
	    \RETURN \FALSE
	  \ELSE
	    \IF{$x_b^\infty = x_b^1 + \frac{1}{1-\lambda_b} \cdot \overrightarrow{x_b^1x_b^2} \notin P_b$}
	      \RETURN \FALSE
	    \ENDIF
	  \ENDIF
	\ENDIF
    \ENDFOR
  \ENDFOR
  
  \RETURN \TRUE
\end{newalgo}
\begin{newalgo}{Exit point}{alg:exit_point}
  \INPUT cycle $\cycle=(x_{i}^j, \dots, x_{i+s}^j)_{j=1}^3$
  \OUTPUT point $x_e \in \mathcal{R}^3$ where cycle $\cycle$ is abandoned
  
 \medskip
  
  \STATE $PQ \leftarrow \mathrm{PriorityQueue}: \mathcal{N} \times \mathcal{R}^3$
  \FORALL{$x_b^j \in \cycle$}
    \IF{$\lambda_b < 1 \vee x_b^\infty \in P_b$}
      \STATE \textbf{skip}
    \ENDIF
    
    \FORALL{$e \in \bd^e(P_b)$}
	\STATE $t_e \leftarrow \frac{(\overrightarrow{x_b^1 v} \times \vec{u}) \cdot (\overrightarrow{x_b^1x_b^2} \times \vec{u})}{||\overrightarrow{x_b^1x_b^2} \times \vec{u}||^2}$
    \ENDFOR
    
    \STATE $t \leftarrow \min_{e \in \bd^e(P_b)} t_e$
    \IF{$\lambda_b \neq 1$}
      \STATE $n \leftarrow \lfloor \frac{\log(1 - t(1 - \lambda_b))}{\log(\lambda_b)} \rfloor$ 
     \STATE $t \leftarrow \frac{1-\lambda_b^n}{1-\lambda_b}$
    \ELSE
      \STATE $t \leftarrow n \leftarrow \lfloor t \rfloor$
    \ENDIF
    
    \STATE $x_e \leftarrow x_b^1 + t \cdot \overrightarrow{x_b^1x_b^2}$ 
    \STATE $PQ.\operatorname{put}(n, x_e)$
  \ENDFOR
  
  \RETURN $PQ.\operatorname{pop}()$
\end{newalgo}

First  we need to introduce some further notations. Let $e = \bd(x) | x \in \mathcal{R}^3$ denote the border element $e$ such that $x \in P_i \cap e$ for some partition $i$ and $e \in \bd(P_i)$. An edge $e$ is given in the form of \[e = \{v + \kappa \cdot \vec{u}| v, \vec{u} \in \mathcal{R}^3; l \leq \kappa \leq h :  l,h \in \mathcal{R}\}.\] Furthermore, given a cycle $\cycle=(x_{i}^j, \dots, x_{i+s}^j)_{j=1}^k$, let $x_{b}^{a}$ denote the point reached by the cycle in the $a$th iteration on border element $\bd(x_b)$. Then $\{ x_b^1 + t \cdot \overrightarrow{x_b^1x_b^2}\}$ for $t \in \mathcal{R}$ is the line through the trajectory points on $\bd(x_b)$ ($\lambda$-line), called a trace in the following.

Algorithm \ref{alg:ptp_reachability} decides for a \ml~system $\mathcal{H}$ and a starting point $x_0 \in \mathcal{R}^3$ whether a point $y \in \mathcal{R}^3$ can be reached by a trajectory $\traj({\cal H},x_0) = x_0,x_1, \dots, y$ through $\mathcal{H}$. The algorithm is allowed to perform $n \in \mathcal{N}$ simulations of $\mathcal{H}$, note that $k$ can be much larger than $n$ as our experiments will show.

While the maximum number of evaluations is not reached, $\mathcal{H}$ is simulated stepwise until either $y$ is reached or a cycle is detected. In our implementation we use the cycle detection algorithm due to Brent \cite{B80} which requires $O(\mu + \lambda)$ system evaluations.\footnote{$\mu$ denoting the first occurrence of the cycle, $\lambda$ indicating the cycle length.} 

If a cycle $\cycle=(x_{i}^j, \dots, x_{i+s}^j)_{j=1}^k$ is detected, several cases have to be distinguished: 

\begin{enumerate}

\item[a)] $\bd(y^\prime)$ is \emph{not} part of the cycle.\footnote{$y^\prime$ is the border element reached from $y$ by simulating $\mathcal{H}$, see algorithm \ref{alg:ptp_reachability} step 1.} If $\cycle$ is infinite according to Algorithm \ref{alg:infinity}, $y$ will never be reached. 

\item[b)]  $\bd(y^\prime) = \bd(x_b)$ for some $b \in [i, i+s]$ and hence is element of the cycle. In this case it needs to be checked whether $\exists a \in \mathcal{N}$ such that $y^\prime = x_b^a$. If so then $y^\prime$ should be in \[\{x_b^1 + t \cdot \overrightarrow{x_b^1x_b^2}\}\] for some $t_y \in \mathcal{R}^+$. For all $x_b^a$, \[t_a = \sum_{i = 1}^a \lambda_b^i = \frac{1-\lambda_b^i}{1-\lambda_b},\] therefore \[a_y = \frac{\log(1 - t_y(1 - \lambda_b))}{\log(\lambda_b)}\] is in $\mathcal{N}^+$ iff $y^\prime$ is reached by a cycle iteration. If $y^\prime$ is not reached by $\cycle$ and the cycle is infinite, $y$ is never reached by $\mathcal{H}$ from $x_0$.

\item[c)]  $y^\prime$ is not reached by $\cycle$ and the cycle is \emph{finite}. We calculate the point $x_e$ where $\cycle$ is abandoned according to algorithm \ref{alg:exit_point} and continue simulation and cycle detection there.

\end{enumerate}

The infinity test (Algorithm \ref{alg:infinity}) checks for every partition $P_b$ in cycle $\cycle$ whether the trace line $\{x_b^1 + t \cdot \overrightarrow{x_b^1x_b^2}\}$ intersects with some $e \in \bd(P_b)$. If a intersection point $x_{IS_{b,e}}$ exists and $\lambda_b \geq 1$ then $\cycle$ must abandon $P_b$ after some number of iterations and therefore can not be infinite. If no intersection occurs or $\lambda_b < 1$ and the convergence point \[x_b^\infty = x_b^1 + \sum_{i = 1}^\infty \lambda_b^i \cdot \overrightarrow{x_b^1x_b^2} = x_b^1 + \frac{1}{1-\lambda_b} \cdot \overrightarrow{x_b^1x_b^2}\] lies before $x_{IS_{b,e}}$ on the trace line, $P_b$ is never abandoned. If no partition within the cycle is ever abandoned then $\cycle$ is infinite.

For a finite cycle $\cycle$, Algorithm \ref{alg:exit_point} determines the exit point $x_e$ from $\cycle$. Recall that an edge can be represented as $e = \{v + \kappa \cdot \vec{u} | l \leq \kappa \leq h\}$. For every $e \in \bd(P_b)$ the intersection point $x_{IS_{b,e}}$ with the trace line is determined. 

The intersection point with the smallest distance $t$ to $x_b^1$ is the exit point to $P_b$. The number of cycle iterations fully contained in $P_b$ is given by \[n_b = \Big\lfloor \frac{\log(1 - t(1 - \lambda_b))}{\log(\lambda_b)} \Big\rfloor\] with point \[x_{e,b} = x_b^1 + \frac{1-\lambda_b^n}{1-\lambda_b} \cdot \overrightarrow{x_b^1x_b^2}.\] Special consideration is given to $\lambda_b = 1$, refer to Algorithm \ref{alg:exit_point}. The overall exit point to cycle $\cycle$ can therefore be determined by $x_e = x_{e,k}$ with $k = \operatorname{argmin}_{b \in [1,t]} n_b$.

\section{Experiments}\label{sec:experiments}

\begin{figure}[tb]
  \centering
  \subfloat[$\mathcal{H}_1$]{\includegraphics[width=.38\textwidth]{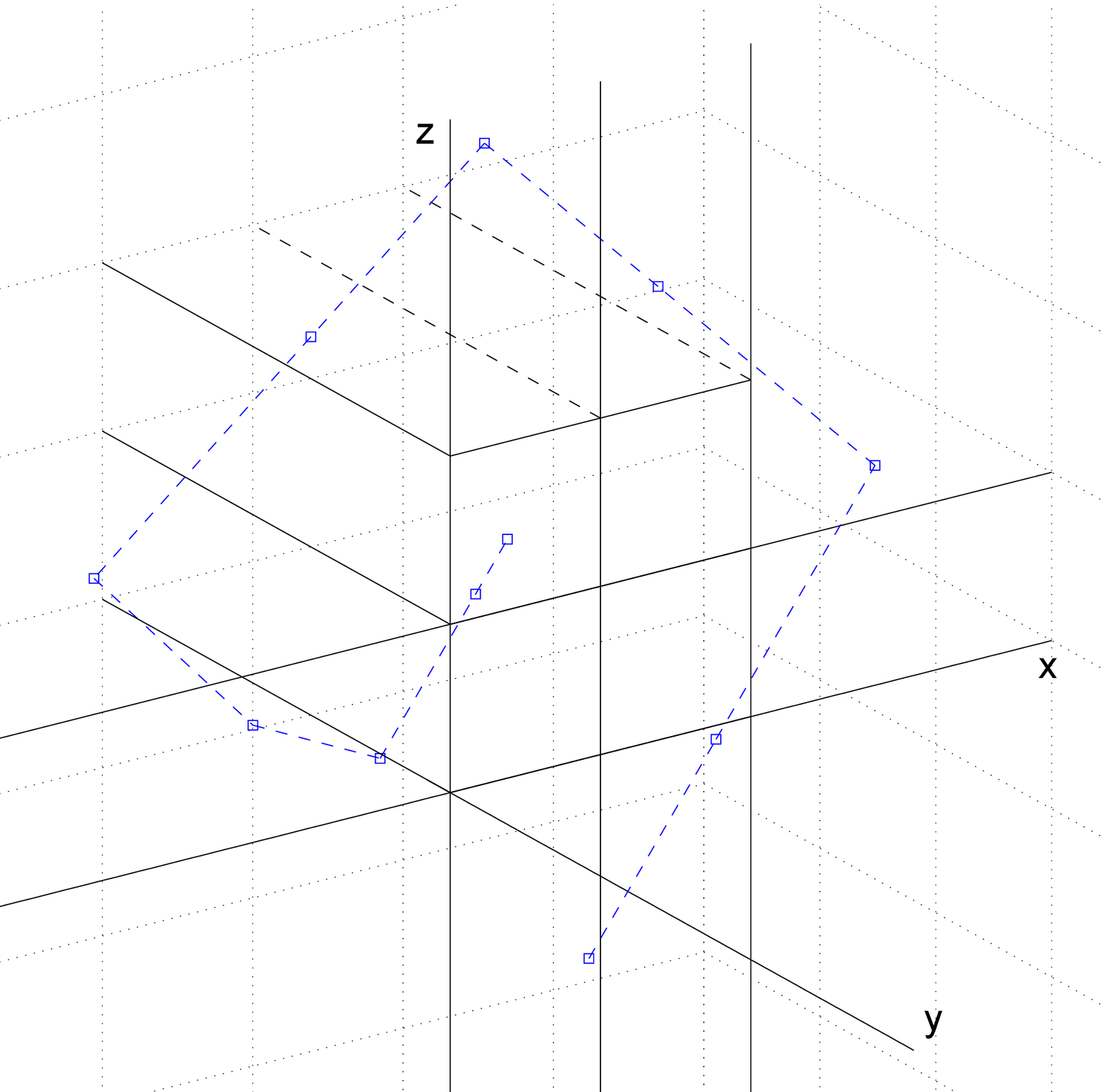}}
  \subfloat[$\mathcal{H}_2$]{\includegraphics[width=.38\textwidth]{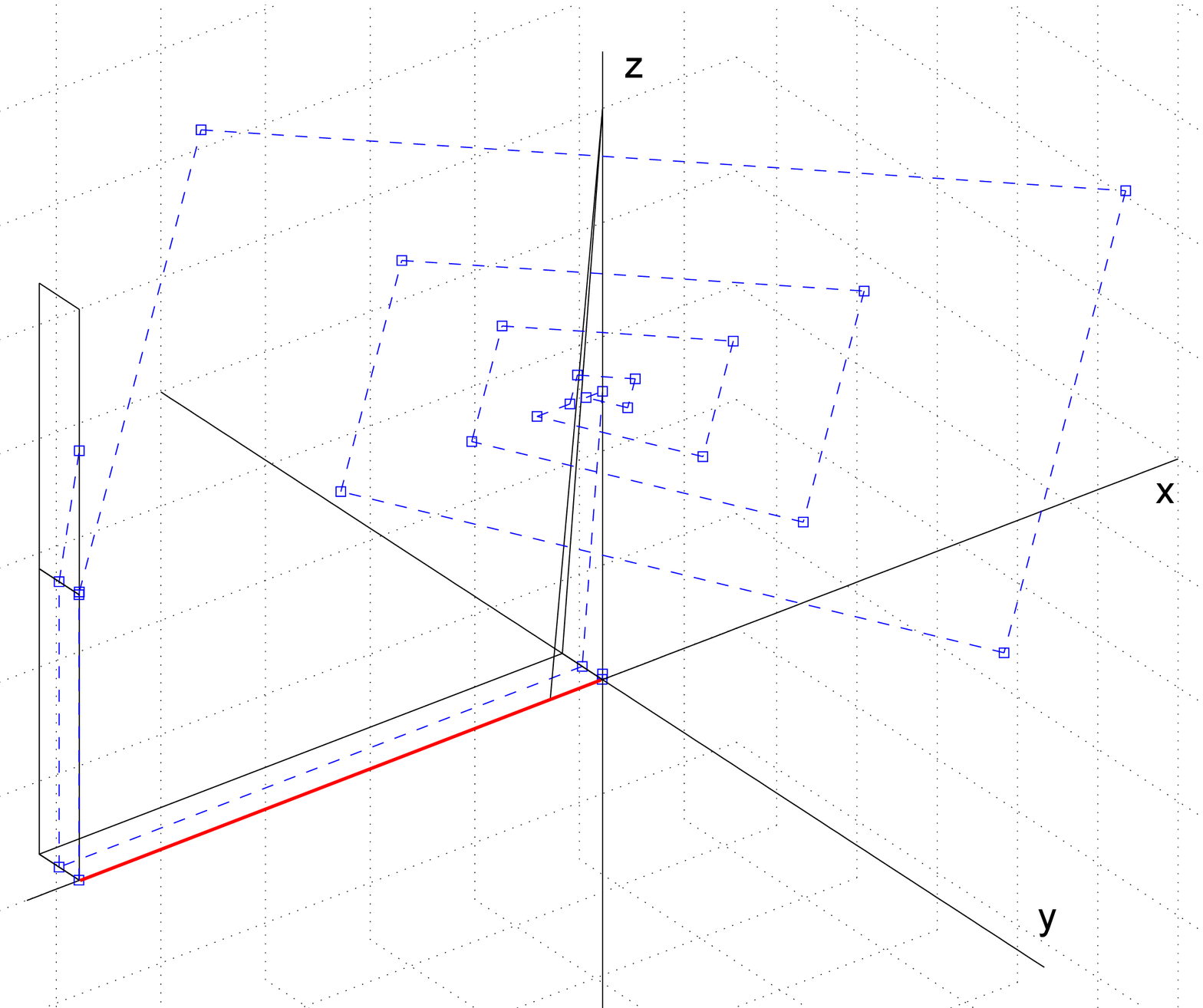}}
  \caption{Example \ml~systems with trajectories. The convergence line of $\mathcal{H}_2$ is highlighted in red.}
  \label{fig:example_systems}
\end{figure}

\begin{figure}[tb]
  \centering
  \subfloat[$xOy$ plane]{\includegraphics[width=.3\textwidth]{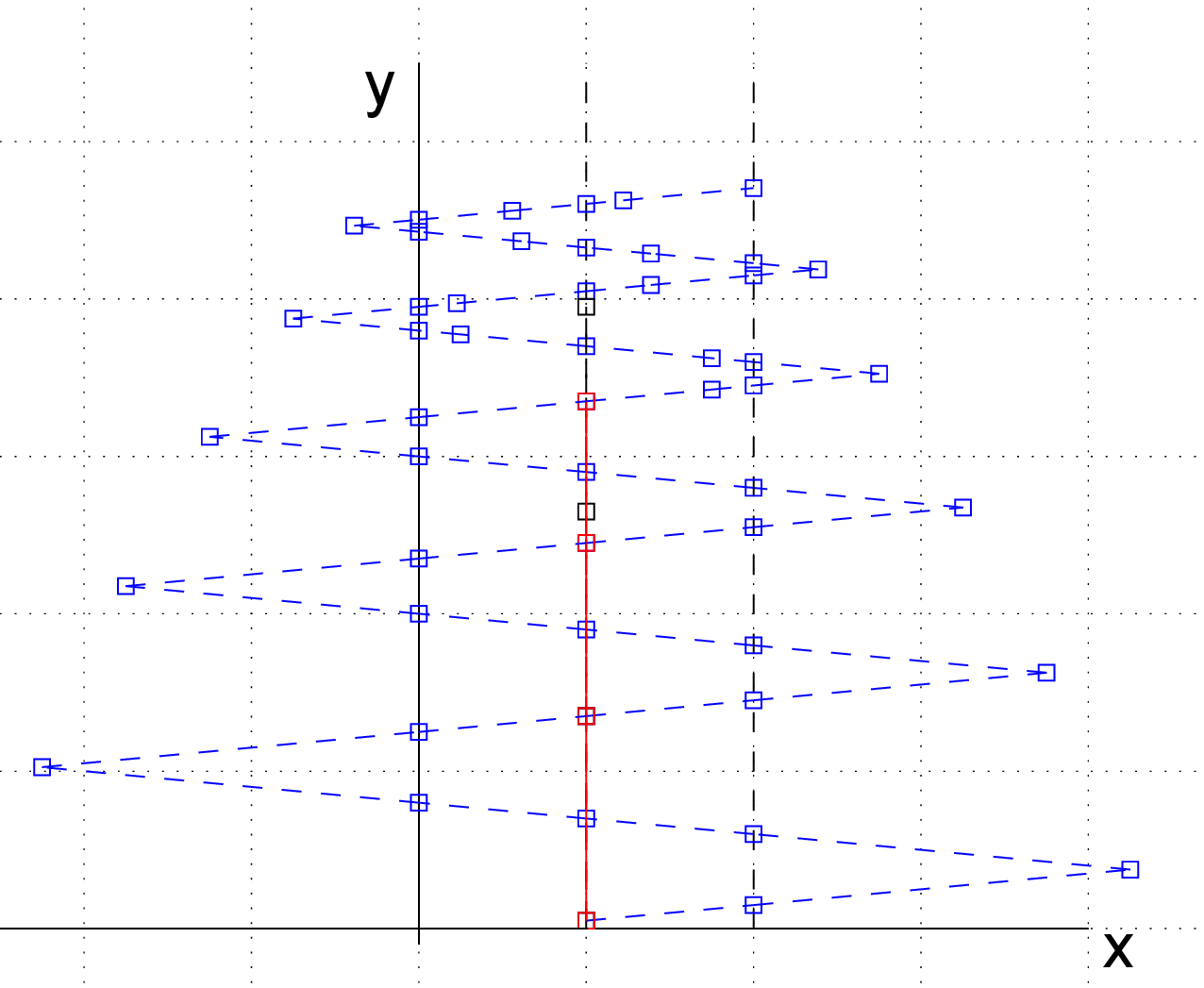}}
  \subfloat[$xOz$ plane]{\includegraphics[width=.3\textwidth]{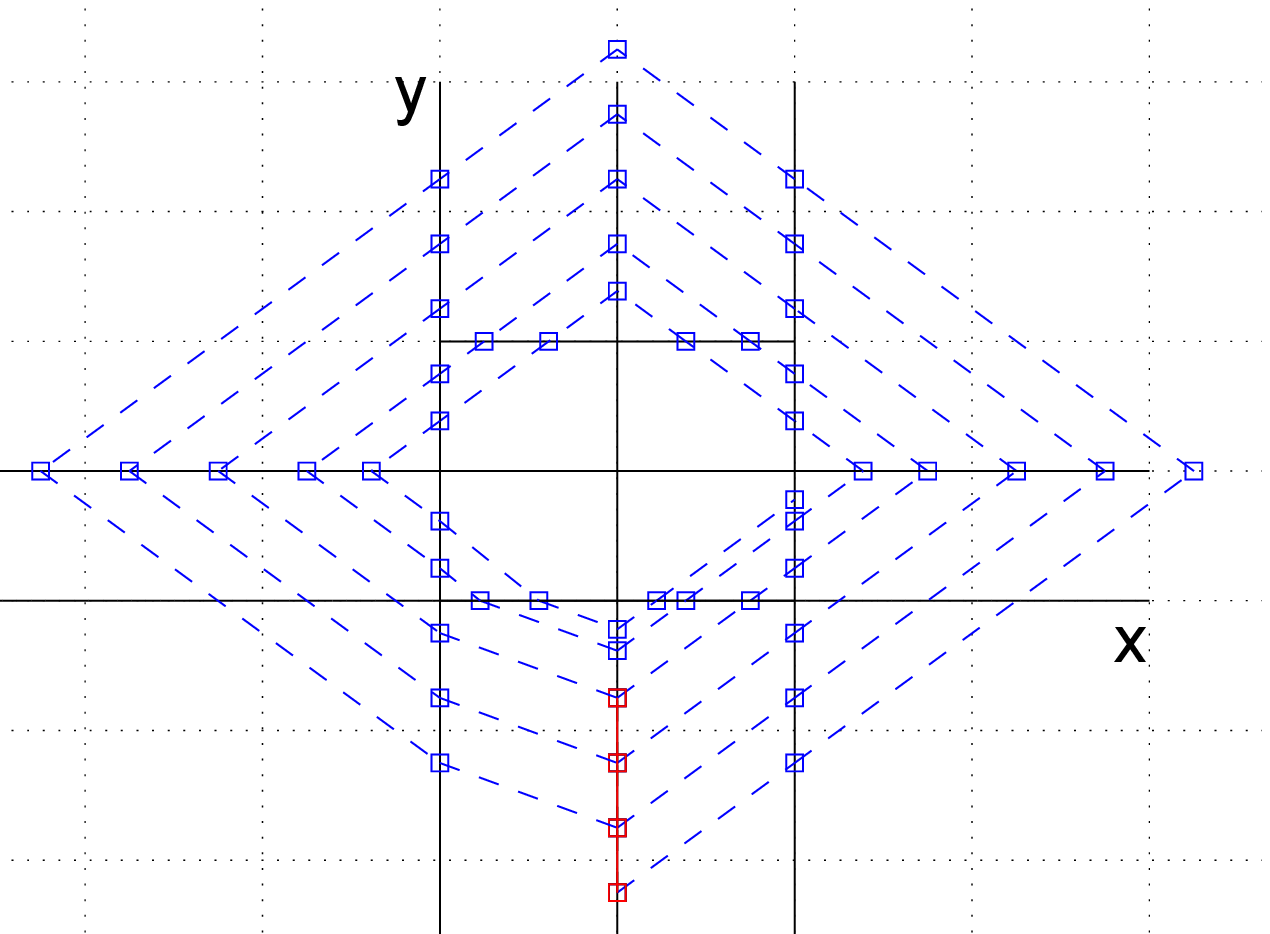}}
  \subfloat[$yOz$ plane]{\includegraphics[width=.3\textwidth]{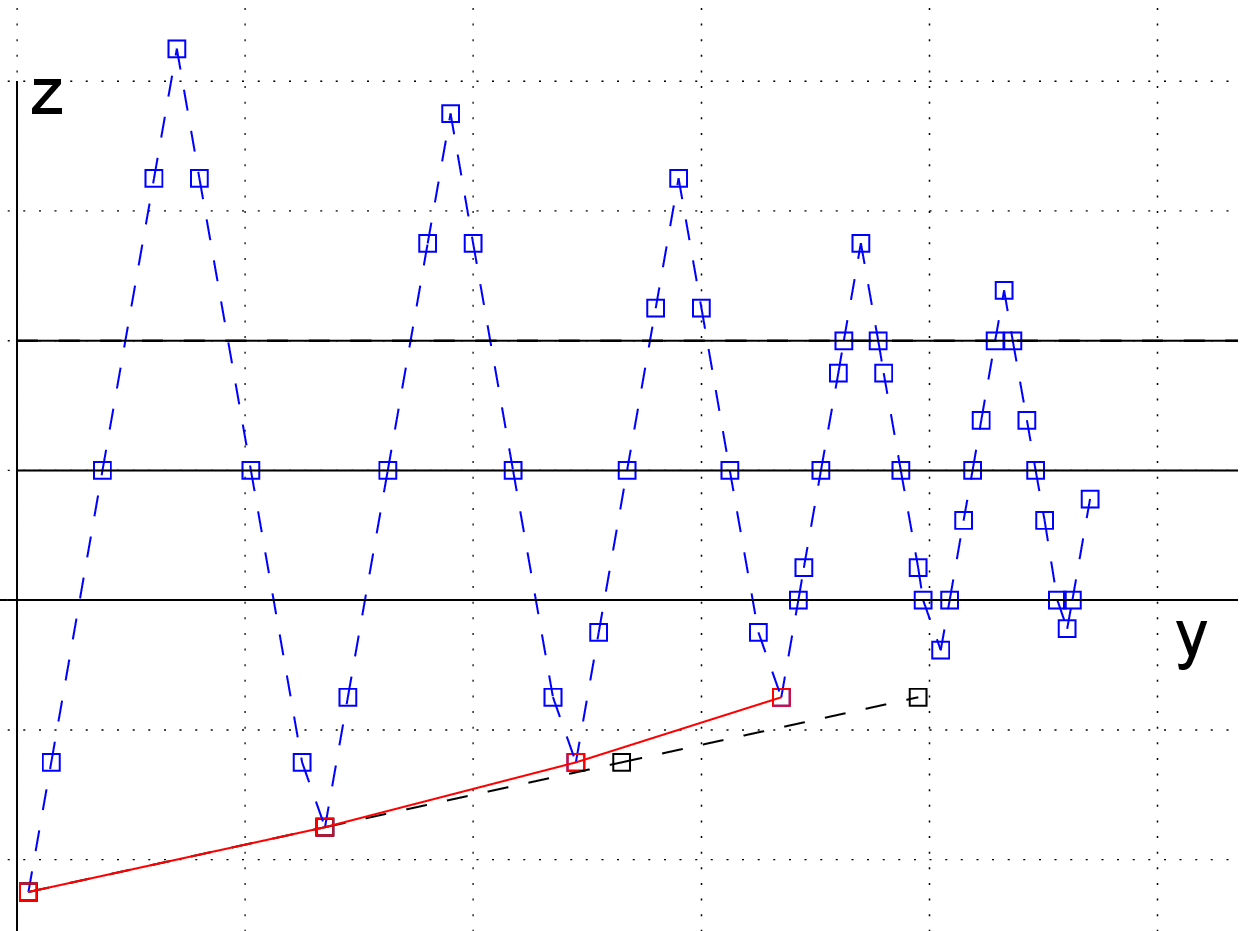}}
  \caption{Projections of trajectory in $\mathcal{H}_1$. In c), the dashed black line indicates the trajectory if the points were to be on one line.}
  \label{fig:projections1}
\end{figure}

\begin{figure}[tb]
  \centering
  \includegraphics[width=.28\textwidth]{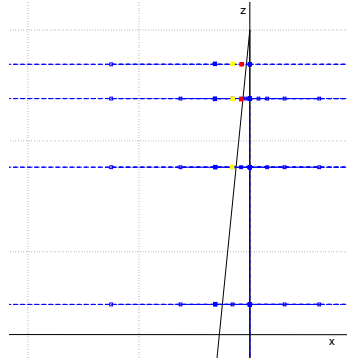}
  \caption{Projection on $xOz$ plane of trajectory in $\mathcal{H}_2$. With each iteration of the hyper cycle, more iterations of the simple cycle through the four 3 dimensional polyhedra are required to reach $P_1$. Therefore no infinite simple cycle exits in $\mathcal{H}_2$.}
  \label{fig:projections2}
\end{figure}

The  experiments were performed on a server with two dual-core 2.8 GHz CPUs and 3 GB main memory under RedHat Linux. We implemented the algorithms in Java using the JAMA library for linear algebra operations \cite{HetAl00}. 
The source code of our implementation is available at \cite{TF11}.

Two sample \ml~systems were used in our experiments and are depicted in Figure \ref{fig:example_systems} alongside a sample trajectory for each.  The  details of the examples can be found in \cite{TF11-2}.

All partitions in $\mathcal{H}_1$ are unbounded in the $Oy$ dimension. The $xOz$ plane is divided into four inner partitions and 8 outer ones. In the inner partitions the trajectory \enquote{rotates} around the center with increasing radius, whereas the radius decreases in the outer partitions. 
Not all choices of $\mathbf{y}$ result in a $\lambda$-system. Note that in figure \ref{fig:projections1} the points of a trace through one border element (depicted in red) lie on one line for 2 projections ($xOy$ and $xOz$), but violate the line criterion for the $yOz$ projection.

System $\mathcal{H}_2$ consists of 5 two dimensional polygons and 4 unbounded three dimensional polyhedra. In the unbounded regions the system \enquote{rotates} around the $Oz$ axis in ever shrinking circles until it reaches $P_1$. $\mathcal{H}_2$ then traverses all two dimensional polygons until it reaches $P_5$ where it is ejected into to the unbounded space again. With each iteration more and more rotations are required in the unbounded space to reach $P_1$ and the trajectory through $P_3$ converges towards the $Ox$ axis. Therefore the system never reaches an infinite simple cycle as illustrated in figure \ref{fig:projections2}.

In $\mathcal{H}_1$ placing the initial point $x_0$ at any distance $n$ from the inner four partitions results in $\Theta(n)$ simulation steps until the inner portion is reached. Our algorithm reduces the complexity to $O(\mu + \lambda)$ in general, considering $\mathcal{H}_1$ even to $O(1)$. Only three cycle iterations are required to calculate $\lambda_b$ for each $x_b \in \cycle=(x_{i}, \dots, x_{i+s})^{\seq}$ and determine the exit point of the cycle. Experimental data is shown in table \ref{tab:experiment} and figure \ref{fig:experiment:a}. We attribute the decrease in running time of our algorithm in the first two iterations to the Java just in time compiler, optimizing code dynamically as it is executed \cite{CetAl97}. Thereafter the algorithm exhibits constant execution time as anticipated.

Modifying $\mathcal{H}_1$ to $\mathcal{H}_1^\prime$ so that the rotation radius decreases in the outer partitions as well as in the inner partitions,\footnote{Specifically setting $c_1$ to $(\frac{1}{2},y_1,-1)$.} produces a convergence line for all partitions at $Cl = \{(C,\cdot, C)\}$. Simulation alone may \emph{never} determine whether $y \in Cl$ is reached, whereas our algorithm requires again three cycle iterations of length at most 12 to determine the reachability of $y$.

\ml~system $\mathcal{H}_2$ exhibits a similar behavior. With each pass through the two dimensional partitions the number of required rotations to reach the $Oz$ axis again increases to infinity. Therefore the reachability of $y$ on or close to the convergence line of the system is not feasibly determined by simulation alone. If $y$ is reached after $n$ hyper cycle iterations, at least $\Theta(n^2)$ simulation steps were required. Our algorithm reduces the complexity to $O(n(\mu + \lambda))$. With improved hyper cycle handling the complexity ought to be further reduced to $O(c \cdot (\mu + \lambda))$, since three passes of the simple cycle to reach the $Oz$ axis suffice to determine the convergence line of the hyper cycle. Again, experimental data is shown in table \ref{tab:experiment} and figure \ref{fig:experiment:b}. The data exhibits the same behavior as for $\mathcal{H}_1$ regarding the Java just in time compilation. Due to its position $x_0 = (10^8, 10^8, C - C/10^8)$ reaches $P_2$ before $P_1$ and therefore requires fewer simulation steps than $x_0 = (10^7, 10^7, C - C/10^7)$.

\begin{table}[tb]
  \centering
  \begin{footnotesize}
  \subfloat[$\mathcal{H}_1$: $y$ is first point to be reached in the inner four partitions.]{
  \begin{tabular}{|l|r|r|r|}
  \hline
  $x_0 = (5, 0, -x)$ & \multirow{2}{*}{\textbf{Steps}} & \multirow{2}{*}{\textbf{Simulation}} & \multirow{2}{*}{\textbf{PTPR}} \\
  \cline{1-1}
  $x$ & & & \\
  \hline  
  $10^1$ & 10 & 10 & 24 \\ 
  $10^2$ & 154 & 98 & 18 \\ 
  $10^3$ & 1594 & 1421 & 6 \\ 
  $10^4$ & 15994 & 2082 & 6 \\ 
  $10^5$ & 159994 & 22609 & 5 \\ 
  $10^6$ & 1599994 & 188276 & 5 \\ 
  $10^7$ & 15999994 & 1952252 & 5 \\ 
  $10^8$ & 159999994 & 19700805 & 7 \\
  \hline
  \end{tabular}}
  \subfloat[$\mathcal{H}_2$: $y = (0, 0, C - C/x)$.]{
  \begin{tabular}{|l|r|r|r|}
  \hline
  $x_0 = (x, x, C - C/x)$ & \multirow{2}{*}{\textbf{Steps}} & \multirow{2}{*}{\textbf{Simulation}} & \multirow{2}{*}{\textbf{PTPR}} \\
  \cline{1-1}
  $x$ & & & \\
  \hline  
  $10^1$ & 10 & 10 & 12 \\
  $10^2$ & 59 & 20 & 13 \\
  $10^3$ & 83 & 17 & 3 \\
  $10^4$ & 111 & 23 & 4 \\
  $10^5$ & 135 & 26 & 4 \\
  $10^6$ & 149 & 30 & 3 \\
  $10^7$ & 161 & 33 & 3 \\
  $10^8$ & 143 & 29 & 3 \\
  \hline
  \end{tabular}}
  \end{footnotesize}
\caption{Comparison of our algorithm to pure simulation to decide reachability of $y$ given $x_0$. Column \emph{steps} lists the number of simulation steps required to reach $y$. Columns \emph{simulation} and \emph{PTPR} give the time in [ms] required to decide reachability by simulation and our algorithm, respectively.}
\label{tab:experiment}
\end{table}

\begin{figure}[tb]
  \centering
  \subfloat[$\mathcal{H}_1$]{\includegraphics[width=.42\textwidth]{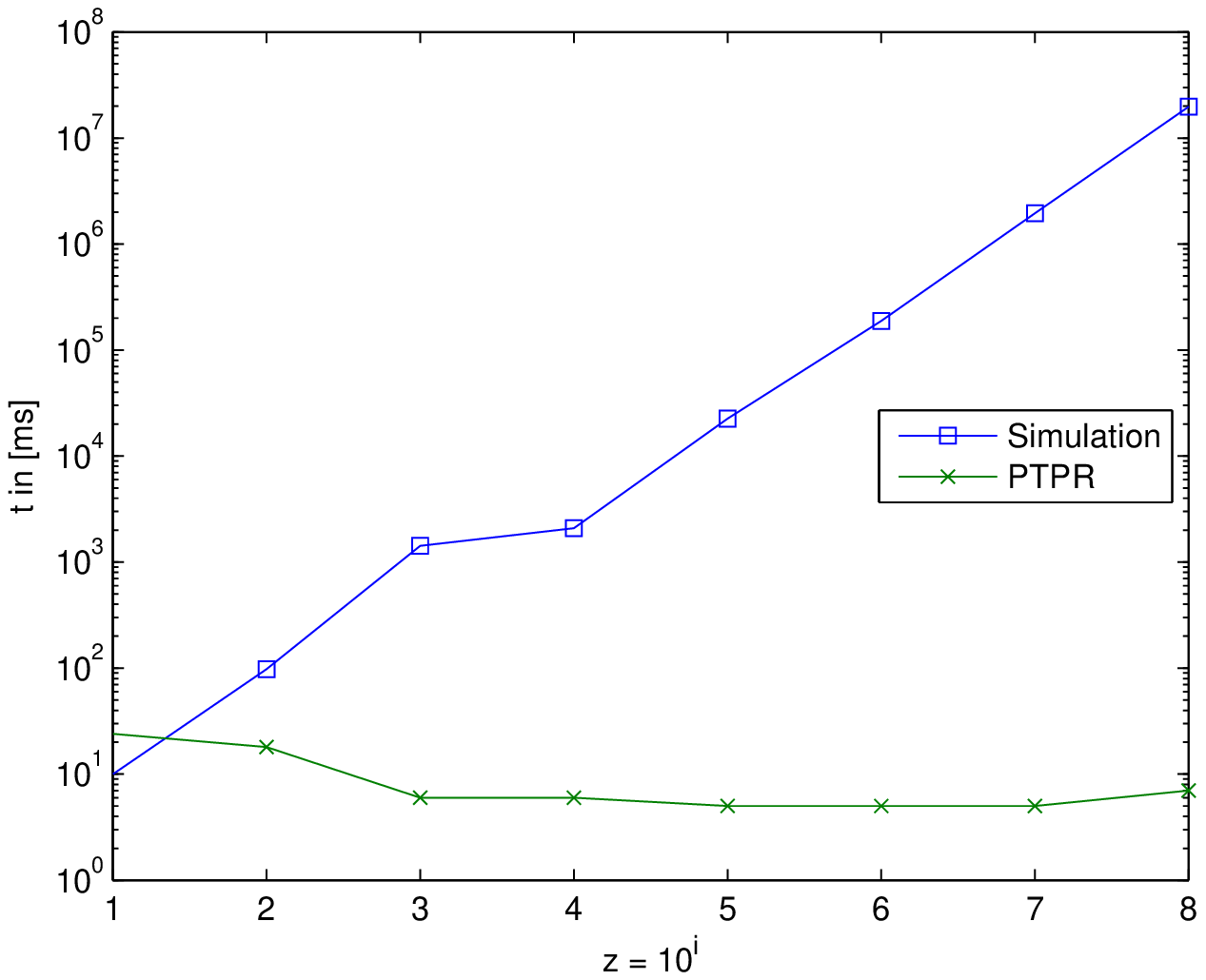}\label{fig:experiment:a}}
  \subfloat[$\mathcal{H}_2$]{\includegraphics[width=.42\textwidth]{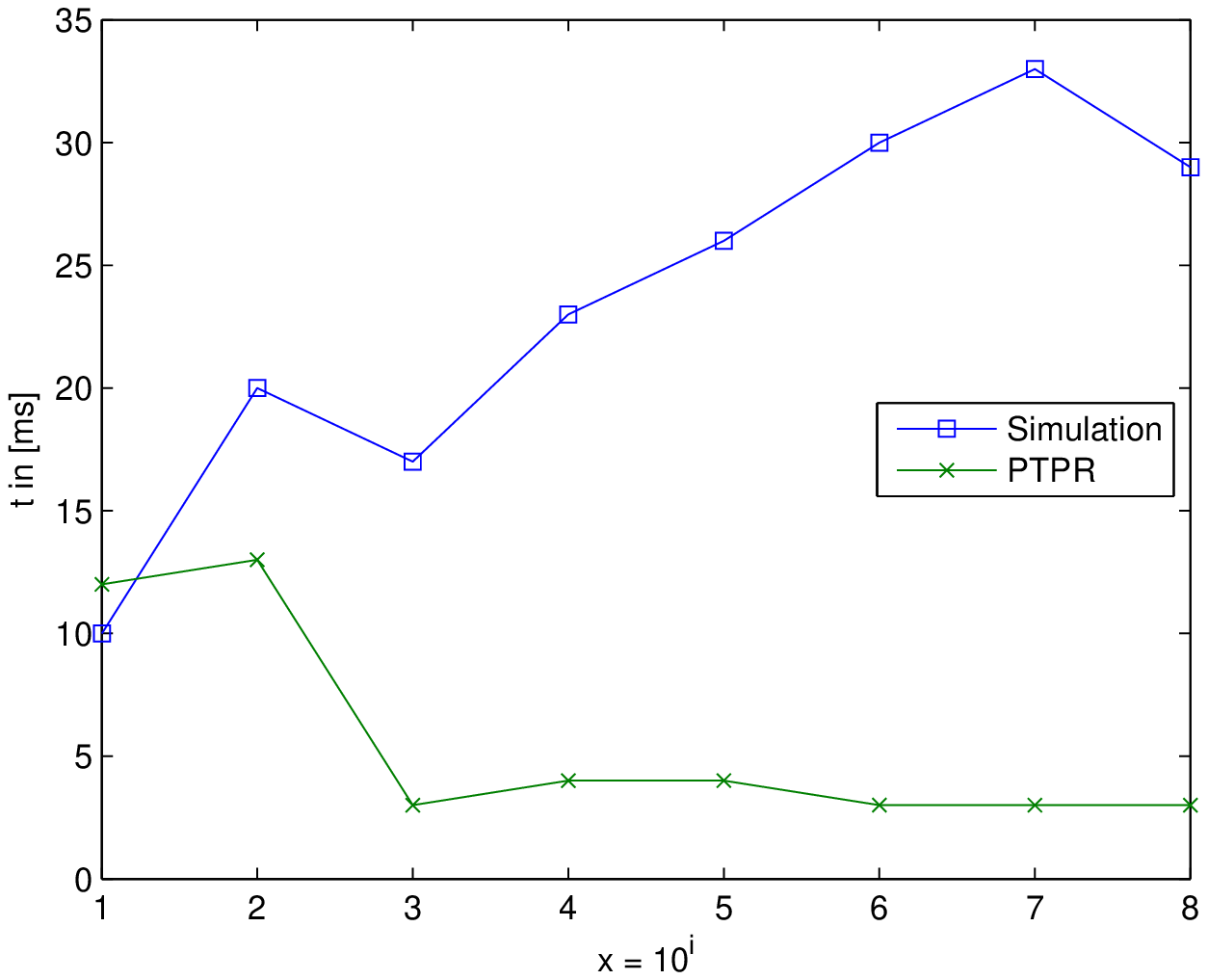}\label{fig:experiment:b}}
  \caption{Comparison of our algorithm (\emph{PTPR}) to pure simulation to decide reachability of $y$ given $x_0$.}
  \label{fig:experiment}
\end{figure}

\section{Conclusions}\label{sec:conclusions}

The complexity of safety critical systems has increased dramatically over last decades. 
The safety properties of such systems  can often not be checked exactly  either due to theoretical boundaries or due to too large
computational efforts required. One of the drawbacks of recent techniques is that too little attention is paid to the geometric properties of the systems under analysis.

A hybrid system (a hybrid automaton) is a formalism that can be used  for modeling safety critical systems. \ml~systems constitute a rather simple class of hybrid systems but yet they are on the boundary of decidable and undecidable systems.  

\ml~systems have certain properties that make them very suitable for a topological analysis. 
 We have shown that on the one hand there are systems with acyclic behavior, and on the other hand if
some properties of $2$-dimensional systems hold in three dimensions then it is possible to answer  exactly  whether a point $y$ is reachable from a point $x$.   
We have presented a prototype implementation of our approach for solving the reachability problem for a subclass of multi-linear systems which we have called $\lambda$-systems.
We compared our approach  with simulation. The results  suggest, that using geometrical properties of the systems can lead  to  orders of magnitude more efficient techniques than simulation.
As soon as our algorithm detects a cycle (or a hypercycle) for which the infinity criterion holds, the algorithm requires constant number of steps. While the number steps for simulation grows exponentially with the distance between points. 
Also  our algorithm can lead to more exact computations because of  less rounding errors during the computation. 


\bibliographystyle{packages/eptcs}

\bibliography{HybridSystems}

\end{document}